\documentclass[journal,10pt]{IEEEtran}
\usepackage{stfloats}
\usepackage{amsmath}
\usepackage{amssymb,latexsym}
\usepackage{graphicx}
\usepackage{color,cite,times}
\usepackage{epstopdf}
\usepackage{algorithmic}
\usepackage[ruled,linesnumbered] {algorithm2e}
\usepackage{booktabs}
\usepackage{bbm} 
\usepackage{url}
\usepackage{fancyhdr}
\usepackage{verbatim}
\usepackage{multirow}
\usepackage{makecell}
\usepackage{graphicx}
\usepackage{indentfirst}
\usepackage{cases}
\newcommand{\RNum}[1]{\uppercase\expandafter{\romannumeral #1\relax}}
\usepackage{colortbl} 
\usepackage{extarrows}
\usepackage{threeparttable}
\usepackage{diagbox}
\usepackage{pifont}
\usepackage{cancel}
\usepackage{framed}
\usepackage{overpic}
\usepackage{subfigure}
\usepackage{amsthm}
\usepackage{enumitem}

\usepackage[framemethod=tikz]{mdframed}
\usepackage{lipsum}
\usepackage[nolist]{acronym}
\begin{acronym}

\acro{SAA}{sample average approximation}
\acro{6G}{sixth-generation}
\acro{AI}{artificial intelligence}

\acro{E2E}{end-to-end}
\acro{OFDMA}{orthogonal frequency-division multiple access}
\acro{FDD}{frequency-division duplexing}
\acro{UL}{uplink}
\acro{DL}{downlink}
\acro{SPIN}{\emph{speculative inference}}
\acro{Multi-SPIN}{\emph{multi-access SPIN}}
\acro{SLM}{\emph{small language model}}
\acro{LLM}{\emph{large language model}}
\acro{w.r.t.}{with respect to}

\acro{MSE}{mean-squared error}
\acro{MMSE}{minimum mean-squared error}
\acro{NMSE}{normalized MSE}
\acro{LMMSE}{linear MMSE}
\acro{SVD}{singular value decomposition}
\acro{KL}{Kullback-Leibler}
\acro{KKT}{Karush-Kuhn-Tucker}
\acro{AWGN}{additive white Gaussian noise}
\acro{i.i.d.}{independent and identically distributed}
\end{acronym}

\usepackage{tcolorbox}
\tcbset{colback=gray!15,colframe=gray!15,boxrule=0pt,arc=2pt}

\newcommand{\bsm}{\boldsymbol}

\newcommand{\mbf}{\mathbf}

\theoremstyle{definition}  

\newtheorem{lemm}{Lemma}
\newtheorem{remk}{Remark}

\newtheorem{prop}{Proposition}

\newtheorem{observ}{Observation}

\definecolor{gray}{RGB}{192,192,192}

\begin{document}
\title{Multi-Access Speculative Inference: \\Uplink or Downlink?}

\author{Chang Cai, 
\IEEEmembership{Member, IEEE,}
	Kaibin Huang,
    \IEEEmembership{Fellow, IEEE}
		\thanks{C. Cai and K. Huang are with the Department of Electrical and Computer Engineering, The University of Hong Kong, Hong Kong SAR (e-mail: changcai@hku.hk; huangkb@hku.hk).
        Corresponding author: K. Huang.
		}
	}
\maketitle

\begin{abstract}

    Multi-access speculative inference (Multi-SPIN) extends SPIN to multi-device edge networks to accelerate cooperative token generation.
    It allows on-device small language models (SLMs) to autoregressively draft multiple tokens for individual generation tasks, while an edge-server large language model (LLM) verifies them in parallel.
    The major communication overhead arises when a drafted token is rejected by the server, in which case sampling the correction token requires access to both the SLM-output draft distribution and the LLM-output target distribution over the full token vocabulary.
    Existing designs typically perform correction at the server by uploading the draft distribution, but transmitting a vocabulary-wide distribution creates a critical uplink (UL) bottleneck.
    Alternatively, the correction can be performed at the device by downloading the target distribution, leveraging the high transmission rates available on the downlink (DL).
    Motivated by this insight, we introduce communication-mode selection as a new design dimension for Multi-SPIN.
    Specifically, each device can adaptively switch between the UL and DL modes to balance the UL bottleneck against the shared DL resource constraint, thereby relieving the overall communication burden.
    We formulate a sum-token-goodput maximization problem that jointly accounts for mode selection, draft-length control, and power allocation.
    For mode selection, we reveal a simple structure of its optimal solution, which enables an efficient search over the number of devices assigned to the UL mode, with the optimal transmit powers determined accordingly.
    For draft-length control, we develop a greedy-search-based algorithm that adapts the device-specific draft lengths to heterogeneous computation and communication capabilities.
    Experimental results on both Qwen2.5 and DeepSeek-R1 model pairs across diverse tasks demonstrate that the proposed framework significantly improves token goodput by flexibly exploiting UL and DL resources.
    
\end{abstract}

\begin{IEEEkeywords}
    Speculative inference (SPIN), mode selection, large language models (LLMs), cooperative token generation.
\end{IEEEkeywords}

\section{Introduction}

Deploying \acp{LLM} at the wireless network edge enables inference to be performed closer to end users, offering a promising approach to supporting low-latency and privacy-preserving \ac{AI} services~\cite{edgeai2022lulu, jiaweishao2026aiflow}.
On one hand, running large models directly on edge devices remains fundamentally challenging due to their stringent computation, memory, and energy constraints.
On the other hand, relying exclusively on an edge server to provide \ac{LLM} services can create a centralized inference bottleneck while leaving distributed computing resources at devices underutilized.
This bottleneck is further exacerbated by the autoregressive nature of \ac{LLM} decoding, where tokens are generated sequentially and thus a separate forward pass for each token is needed.
To address these limitations, \ac{SPIN}~\cite{SDgoogle2023icml, SDdeepmind2023} has recently emerged as a promising paradigm for collaborative edge \ac{LLM} inference.
The key idea of \ac{SPIN} is to deploy a lightweight \ac{SLM} at the device to generate multiple draft tokens locally, and then use the server-side \ac{LLM} to verify the drafted block in parallel~\cite{spin2025infocom, edgellm2025tmc}.
Because evaluating a sequence of drafted tokens concurrently requires only a single forward pass, \ac{SPIN} significantly reduces the computational overhead at the server without sacrificing \ac{LLM}-level inference quality. 
Furthermore, by offloading draft generation to devices, \ac{SPIN} effectively harnesses the computing resources distributed across the network edge.

Existing efforts on collaborative edge inference have primarily focused on split inference~\cite{split_inference2020xuchen, edge_inference2020jiaweishao}, where a large model is partitioned between the device and the edge server.
In this paradigm, the device executes the early layers of the model and transmits intermediate features to the server, which then completes the remaining computation~\cite{cai2024mcr2, wen2024edgeAI, cai2025e2e}.
By dynamically adjusting the split point, split inference can balance the computation workloads between the devices and the server as well as control the wireless transmission overhead~\cite{split2022jiayan}.
However, this layer-wise partitioning strategy faces inherent limitations when applied to autoregressive \ac{LLM} decoding.
Since the generation of each token depends on all preceding tokens, the device and the server must repeatedly execute their respective model partitions and exchange intermediate features at every single decoding step~\cite{zhangkai2025jstsp}.
This continuous back-and-forth results in excessive computation and communication delays, limiting the suitability of split inference for latency-sensitive edge \ac{LLM} services.
In contrast, \ac{SPIN} introduces a fundamental paradigm shift from sequential token generation to parallel token verification.
By deploying a low-complexity \ac{SLM} at the device to speculate on multiple future tokens, several inherently sequential decoding steps can be consolidated into a single parallel verification block performed by the server-side \ac{LLM}.
This paradigm shift substantially improves token-generation efficiency by minimizing the number of costly, memory-bound forward passes required at the server.

Despite its computational advantages, deploying \ac{SPIN} at the wireless edge introduces a new communication bottleneck.
This bottleneck emerges not during the initial draft verification, but when a drafted token is rejected and a correction token must be generated.
During verification, acceptance checking only compares the draft probability assigned by the \ac{SLM} to each drafted token against the corresponding target probability assigned by the \ac{LLM},
without requiring the full draft or target distributions~\cite{SDgoogle2023icml, SDdeepmind2023}.
Consequently, the device only needs to upload the drafted tokens and their associated scalar draft probabilities to the server.
This incurs negligible communication overhead.
If all drafted tokens are accepted, the server simply returns lightweight feedback to continue generation.
However, when a drafted token is rejected, a correction token must be sampled from a residual distribution.
This residual distribution inherently depends on the full, vocabulary-wide output distributions of both the \ac{SLM} and the \ac{LLM}~\cite{SDgoogle2023icml, SDdeepmind2023}.
Because modern \acp{LLM} feature massive vocabularies (e.g., 152{,}064 entries for Qwen2.5~\cite{qwen2024qwen25_14b_config} and 200{,}019 entries for GPT-4o~\cite{openai2024cookbook_tiktoken}), exchanging these high-dimensional distributions across the wireless link to enable token correction incurs substantial communication overhead, termed \emph{token-correction overhead}.

Existing studies~\cite{cezheng2025wcsp, tony2025icmlcn, guangyizhang2026quantize, bhattacharjee2025conformal, zheng2026multispin} mainly support the correction step in an \ac{UL} manner: the device uploads the draft distribution to the server, where the corresponding target distribution is already available.
We refer to this implementation as the \emph{\ac{UL} mode of \ac{SPIN}}.
To alleviate the resulting \ac{UL} burden, prior works have explored uncertainty estimation to skip unnecessary \ac{UL} transmissions~\cite{tony2025icmlcn}, as well as quantization~\cite{guangyizhang2026quantize} and sparsification~\cite{bhattacharjee2025conformal} techniques to compress the uploaded draft distributions.
Despite these advances, draft-distribution uploading can still dominate the \ac{E2E} latency due to the limited \ac{UL} rates of edge devices, thereby offsetting the acceleration gain of \ac{SPIN}.
In fact, sampling the correction token does not necessarily have to be performed at the server.
Since the draft distribution is already available at the device,
the correction step can alternatively be carried out at the device by downloading the target distribution from the server~\cite{cezheng2025icml_wksp, jia2026covspec}.
We refer to this \ac{DL}-based implementation as the \emph{\ac{DL} mode of \ac{SPIN}}.
By shifting full-vocabulary distribution transmission from the \ac{UL} to the \ac{DL}, the \ac{DL} mode can exploit the typically stronger \ac{DL} capability of wireless edge systems and thereby bypass the \ac{UL} bottleneck.
In the point-to-point settings considered in existing studies, switching from the \ac{UL} to the \ac{DL} mode is straightforward. 
In a multi-user system, however, this \ac{DL} advantage is no longer unconditional, as discussed below.

In this work, we study \ac{Multi-SPIN} to enable cooperative token generation in a multi-device edge network.
In this architecture, each device employs a local \ac{SLM} to draft tokens for its specific generation task, while the edge server batches the drafted token sequences from multiple devices for parallel verification using a shared \ac{LLM}.
Despite high \ac{DL} rates, assigning all devices to the \ac{DL} mode can severely overload the shared \ac{DL} resources and eventually degrade the overall token-generation efficiency.
This reveals a fundamental tension between the limited \ac{UL} rates and the shared \ac{DL} resource constraints.
To resolve this tension, devices must adaptively switch between the \ac{UL} and \ac{DL} modes to optimally exploit heterogeneous wireless resources.
This motivates us to introduce (communication) \emph{mode selection} as a new design dimension for \ac{Multi-SPIN}.
Indeed, mode selection extends the core resource-utilization philosophy of \ac{Multi-SPIN} from the computation domain to the communication domain.
Much like offloading draft generation to the devices harnesses distributed computing power, mode selection alleviates the shared \ac{DL} bottleneck by adaptively shifting part of the transmission load to the \ac{UL}. 
This perspective highlights mode selection as a natural and complementary mechanism for maximizing the \ac{E2E} efficiency of \ac{Multi-SPIN}.

The choice of communication mode is inherently coupled with the control of draft lengths, which directly affects the token-generation efficiency. 
Specifically, the draft length determines the maximum number of tokens that can be accepted in a single verification round, the local drafting and server-side verification delays, and the probability of a draft rejection.
This creates a fundamental tradeoff: while a longer draft enhances the potential speculative gain by exposing more tokens to parallel verification, it simultaneously increases the likelihood of rejection, thereby triggering the need for costly distribution transmissions.
Consequently, the draft lengths must be determined jointly with the mode-selection mechanism, which is itself tightly intertwined with transmit-power allocation. 
These complex interdependencies necessitate a joint design of draft-length control, mode selection, and power allocation to maximize the sum token goodput, defined as the expected number of output/committed tokens per second across all devices. 
The introduction of this communication-mode-selection framework, together with the formulation and solution of the resulting joint design problem, constitutes the core contribution of this work, as summarized below.
\begin{itemize}
    \item \textbf{\ac{UL}/\ac{DL} Mode Selection for \ac{Multi-SPIN}:}
    We propose a mode-selection-based \ac{Multi-SPIN} protocol that enables each device with a rejected draft to perform correction via either the \ac{UL} or \ac{DL} mode.
    The resulting design problem is inherently stochastic because draft lengths must be selected at the beginning of each round, before the random verification outcomes are revealed.
    After the rejected devices are identified, their communication modes and transmit powers are determined accordingly.
    We formulate the joint design of draft-length control, mode selection, and power allocation as a two-stage stochastic optimization problem for maximizing the expected sum token goodput.
    By exploiting the sequential decision structure, without loss of optimality, we decompose the problem into i) a subproblem of reactive mode selection and power allocation for each realized verification outcome, and ii) a subproblem of proactive draft-length control that incorporates the optimized reactive decisions across all possible outcomes.
    
    \item \textbf{Optimal Reactive Mode Selection and Power Allocation:}
    We characterize the globally optimal reactive communication strategy for any realized verification outcome.
    By sorting the devices according to their minimum \ac{UL} transmission delays, we prove that an optimal mode assignment has a prefix structure: devices with smaller \ac{UL} delays operate in the \ac{UL} mode, while the remaining devices operate in the \ac{DL} mode.
    This property reduces the original combinatorial optimization to a one-dimensional bisection search over the \ac{UL}/\ac{DL} split point.
    For each candidate mode split point, the optimal transmit powers are determined by maximum-power \ac{UL} transmission and delay-equalizing \ac{DL} power allocation.
    This yields the optimal low-complexity policy for joint mode selection and power allocation.
    
    \item \textbf{Proactive Draft-Length Control:}
    We develop a greedy-search-based algorithm that optimizes device-specific draft lengths under heterogeneous computation and communication conditions. 
   	Specifically, the design accounts for device-side computation heterogeneity, rejection uncertainty, and the communication latency under the optimal reactive mode-selection and power-allocation decisions.
    To efficiently evaluate the expected token goodput, we employ a \ac{SAA} approach that avoids exhaustive enumeration of the exponentially many verification outcomes.
    
    \item \textbf{Experimental Results:}
    We conduct extensive experiments using both Qwen2.5~\cite{qwen2024qwen25_14b_config}  and DeepSeek-R1~\cite{guo2025deepseek} model pairs across diverse code-generation, mathematical-reasoning, conversational, and instruction-following tasks.
    The results show that the proposed framework consistently outperforms fixed-mode and conventional draft-length baselines by adaptively balancing \ac{UL}/\ac{DL} communication and tailoring draft lengths to heterogeneous system conditions.
\end{itemize}

The remainder of this paper is organized as follows.
Section~\ref{sec:spin_model_edge} reviews the \ac{SPIN} mechanism and discusses key insights for edge deployments.
Section~\ref{sec:multi_spin_protocol} presents the mode-selection-based \ac{Multi-SPIN} protocol and formulates the two-stage token-goodput maximization problem.
Section~\ref{sec:reactive_mode_power} develops the optimal reactive mode-selection and power-allocation solution.
Section~\ref{sec:proactive_draft_length} proposes the proactive draft-length control algorithm.
Section~\ref{sec:experiments} provides experimental results.
Finally, we conclude this paper in Section~\ref{sec:conclusions}.

\section{\ac{SPIN} Mechanism and Edge Deployment Insights}\label{sec:spin_model_edge}
In this section, we first review the core mechanism of \ac{SPIN}~\cite{SDgoogle2023icml, SDdeepmind2023}.
Based on this review, we then discuss key design insights for edge deployment.

\subsection{\ac{SPIN} Mechanism} \label{sec:spin_model}
Fig.~\ref{fig:speculative_decoding} illustrates the draft-then-verify workflow of \ac{SPIN}.
The \ac{SLM} serves as a lightweight proposal model for fast local drafting, whereas the \ac{LLM} serves as the target model whose output distribution should be preserved.
In particular, \ac{SPIN} accelerates \ac{LLM} inference by allowing the \ac{SLM} to draft multiple tokens autoregressively and then using the \ac{LLM} to verify the drafted block in parallel.
Since the entire drafted block is available before verification, the \ac{LLM} can evaluate all drafted positions through a single forward pass. 
The main steps of \ac{SPIN} are described as follows. 

\begin{figure}[t]
	\centering
	\includegraphics[width=.95\columnwidth]{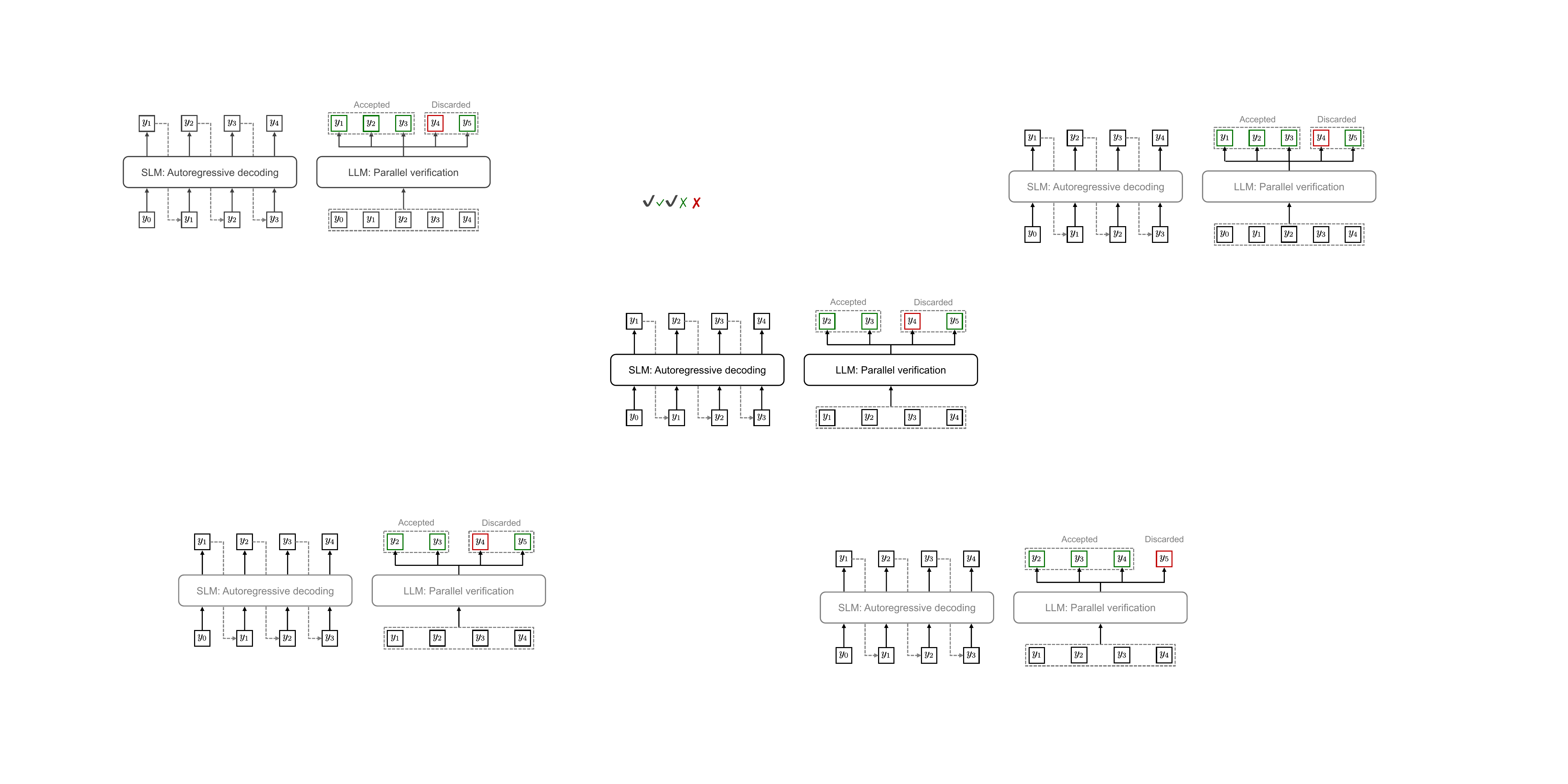}
	 \vspace{-.5em}
	\caption{Autoregressive drafting and parallel verification in \ac{SPIN}.}
	 \vspace{-.5em}
	\label{fig:speculative_decoding}
\end{figure}
\begin{enumerate}
    \item \textbf{Autoregressive drafting:} Given the current prefix $\mbf{x}$, the \ac{SLM} autoregressively generates a block of $\gamma$ draft tokens, denoted by $\mbf{y} \triangleq ( y_{1}, \dots, y_{\gamma})$.
    Specifically, at the $j$-th drafting step, 
    the \ac{SLM} takes $\mbf{x}$ and $\mbf{y}_{<j} \triangleq (y_{1}, \dots, y_{j-1})$ as input and performs one forward pass to obtain the draft distribution $\pi^{\rm S}_j(\cdot \mid \mbf{x},\mbf{y}_{<j})$.
    The draft token $y_j$ is then sampled from this distribution as
    \begin{align}
        y_{j} \sim  \pi^{\rm S}_j (\cdot \mid \mbf{x}, \mbf{y}_{<j}),  ~~j = 1, \dots, \gamma. 
    \end{align}
            
    \item \textbf{Parallel verification:} Given the current prefix $\mbf{x}$ and the drafted block $\mbf{y}$, the \ac{LLM} performs a single forward pass to obtain the target distributions for all drafted positions and the next position, given by
    \begin{align}
        \pi^{\rm L}_j (\cdot \mid \mbf{x}, \mbf{y}_{ <j}), ~~ j = 1, \dots, \gamma +1.
    \end{align}
    After obtaining these target distributions, \ac{SPIN} performs acceptance checking for the drafted tokens.
    Specifically, for the $j$-th drafted token $y_j$, \ac{SPIN} compares the target probability assigned by the \ac{LLM}, denoted by $\pi^{\rm L}_j(y_j) \triangleq \pi^{\rm L}_j(y_j \mid \mbf{x}, \mbf{y}_{<j})$, with the draft probability assigned by the \ac{SLM}, denoted by $\pi^{\rm S}_j(y_j) \triangleq \pi^{\rm S}_j(y_j \mid \mbf{x}, \mbf{y}_{<j})$.
    The token $y_j$ is accepted if
    \begin{align} \label{eqn:acceptance_test}
        u_{j} < \min \left(1,\frac{\pi^{\rm L}_{j}(y_{j})}{\pi^{\rm S}_{j}(y_{j})}\right), ~~ u_{j} \sim \mathrm{Unif}(0,1),
    \end{align}
    where $\mathrm{Unif}(0,1)$ denotes the uniform distribution over the interval $(0,1)$.

    \item \textbf{Bonus-/correction-token sampling:}
    Once the verification outcomes are determined, \ac{SPIN} proceeds according to the following two cases:
    \begin{itemize}
        \item \textbf{Case I (all drafted tokens accepted):}
        If all $\gamma$ drafted tokens are accepted, 
        the next-position target distribution $\pi^{\rm L}_{\gamma+1}(\cdot) \triangleq \pi^{\rm L}_{\gamma+1}(\cdot \mid \mbf{x},\mbf{y})$ is used to sample one additional bonus token as
        \begin{align}
            y_{\gamma + 1} \sim \pi^{\rm L}_{\gamma + 1}(\cdot).
        \end{align}
        The accepted draft tokens and the bonus token are appended to the prefix.
        The current round then terminates, and the next round starts from the updated prefix.

        \item \textbf{Case II (rejection occurs):}
        Otherwise, let $r$ denote the position of the first rejected token.
        Then, all drafted tokens before position $r$ are accepted.
        The drafted token at position $r$ is replaced by a correction token sampled from the residual distribution.
        Specifically,
        \begin{align} \label{eqn:resample}
            y^\prime_{r} \sim \mathrm{norm}\left(\max\left(0,\, \pi^{\rm L}_{r}(\cdot)-\pi^{\rm S}_{r}(\cdot)\right)\right),
        \end{align}
        where $\mathrm{norm}(\cdot)$ denotes the normalization operator that rescales a nonnegative function over the vocabulary into a valid probability distribution.
        All drafted tokens after position $r$ are discarded.
        The accepted draft tokens before $r$ and the correction token $y^\prime_{r}$ are appended to the prefix before the next round starts.
    \end{itemize}
\end{enumerate}

In this way, \ac{SPIN} preserves the exact target distribution of the \ac{LLM} while allowing multiple tokens to be generated in one round.

\subsection{Key Insights for Edge Deployment}
We next identify the information required at different steps of \ac{SPIN}, which provides the basis for characterizing the communication overhead of deploying \ac{SPIN} at the wireless edge.

\begin{observ}
    In parallel verification, acceptance checking requires only the scalar draft and target probabilities assigned to each drafted token.
    If all drafted tokens are accepted, bonus-token sampling requires the next-position target distribution from the \ac{LLM}, without any additional information from the \ac{SLM}.
    By contrast, when a drafted token is rejected, correction-token sampling requires access to both the full draft and target distributions at the rejection position.
\end{observ}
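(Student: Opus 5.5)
The plan is to read off, step by step from the three stages of the \ac{SPIN} mechanism in Section~\ref{sec:spin_model}, exactly which quantities each operation takes as input. The observation has three claims, one per stage, and each will follow by inspecting the corresponding defining formula. No probabilistic argument is needed, because the claims concern informational dependence rather than distributional correctness.

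\textbf{Acceptance checking.} I will inspect the acceptance test \eqref{eqn:acceptance_test}. For position $j$ it uses only the independent uniform variable $u_j$, the scalar $\pi^{\rm L}_j(y_j)$ and the scalar $\pi^{\rm S}_j(y_j)$. Hence the sequential checks over $j=1,\dots,\gamma$, which stop at the first failure, need only the $\gamma$ pairs of scalar probabilities at the drafted tokens. No other vocabulary entries are involved.

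\textbf{Bonus-token sampling.} In Case~I the bonus token is drawn as $y_{\gamma+1}\sim\pi^{\rm L}_{\gamma+1}(\cdot)$. This quantity is computed by the \ac{LLM} in its single verification pass from $\mbf{x}$ and $\mbf{y}$. No \ac{SLM} output enters the formula; the drafted tokens themselves are already needed for verification. So Case~I requires no additional information from the \ac{SLM}.

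\textbf{Correction-token sampling.} In Case~II I will examine \eqref{eqn:resample}. The residual $\max(0,\pi^{\rm L}_r(v)-\pi^{\rm S}_r(v))$ must be evaluated at every vocabulary entry $v$, and the normalization constant $\sum_v \max(0,\pi^{\rm L}_r(v)-\pi^{\rm S}_r(v))$ sums over the entire vocabulary. Therefore both full distributions at position $r$ are required.

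The only subtle point, and what I expect to be the main obstacle, is justifying the word ``requires'' in the third claim, that is, ruling out a sampling shortcut that uses less information. I would argue that the residual is determined only by the pair of full distributions. For any vocabulary entry $v$, one can perturb $\pi^{\rm S}_r(v)$, and correspondingly another entry to keep the vector normalized, without changing $\pi^{\rm S}_r(y_r)$ or $\pi^{\rm L}_r(y_r)$. Such a perturbation changes the residual distribution. The same argument applies with $\pi^{\rm L}_r$ in place of $\pi^{\rm S}_r$. Consequently no proper subset of the entries of either distribution suffices in general to sample exactly from \eqref{eqn:resample}, which is what is needed to preserve the target distribution.
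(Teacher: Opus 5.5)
Your three inspection steps match how the paper supports this observation. The paper gives no separate proof. It reads the claims directly off the acceptance test \eqref{eqn:acceptance_test}, the Case~I rule $y_{\gamma+1}\sim\pi^{\rm L}_{\gamma+1}(\cdot)$, and the residual in \eqref{eqn:resample}. The introduction phrases the third point as the residual ``inherently depending on'' both vocabulary-wide distributions. That part of your proposal is correct.

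The additional ``necessity'' step, which you single out as the main obstacle, is where the proposal goes wrong. The paper neither proves nor needs it.

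\begin{itemize}
\item \textbf{Knowing the residual is not sampling from it.} Your perturbation argument at best shows that the residual \emph{distribution} $\mathrm{norm}\bigl(\max(0,\pi^{\rm L}_r-\pi^{\rm S}_r)\bigr)$ depends, generically, on every entry. Exact sampling does not require determining that distribution.
\item \textbf{A concrete counterexample.} Draw $v\sim\pi^{\rm L}_r(\cdot)$ and accept it with probability $\max\bigl(0,1-\pi^{\rm S}_r(v)/\pi^{\rm L}_r(v)\bigr)$, repeating until acceptance. In each trial, $v$ is drawn and accepted with probability $\max(0,\pi^{\rm L}_r(v)-\pi^{\rm S}_r(v))$, so the output follows \eqref{eqn:resample} exactly. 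The expected number of trials is finite whenever a rejection has occurred, because then $\pi^{\rm S}_r\neq\pi^{\rm L}_r$. Yet $\pi^{\rm S}_r$ is only ever evaluated at the randomly drawn candidates. So your inference ``the residual depends on every entry, hence no proper subset of entries suffices to sample exactly'' is invalid. It only rules out samplers that read a fixed, predetermined subset.
\item \textbf{Even the weaker claim needs care.} Since the entries sum to one, all but one entry determine the last. Your perturbation moves two entries, so it covers only subsets missing at least two entries. Moreover, a given perturbation need not change the residual at all, for example when both perturbed entries keep $\pi^{\rm S}_r\geq\pi^{\rm L}_r$.
\end{itemize}

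You have two clean options. Drop this step and read ``requires'' as the paper does: the correction rule \eqref{eqn:resample}, as specified, takes both full distributions as inputs. Alternatively, restate it as a dependence property of the residual distribution, not as a lower bound on what any exact sampler must access.
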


The above observation suggests that, when all drafted tokens are accepted, the communication overhead is negligible.
In this case, the device only needs to upload the drafted tokens and their corresponding scalar draft probabilities, while the server only needs to return the bonus token.
The critical communication bottleneck arises upon token rejection, since correcting the rejected token requires transmitting a vocabulary-wide distribution.

\begin{observ}
    Since the draft distribution is available at the device while the target distribution is available at the server, correction-token sampling can be performed either at the server via draft-distribution uploading or at the device via target-distribution downloading.
\end{observ}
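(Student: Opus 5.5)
The justification for Observation~2 comes from inspecting the correction rule~\eqref{eqn:resample} and recording where each of its inputs is produced. The residual distribution $\mathrm{norm}(\max(0,\pi^{\rm L}_r(\cdot)-\pi^{\rm S}_r(\cdot)))$ is a deterministic function of just two vocabulary-wide vectors, $\pi^{\rm S}_r(\cdot)$ and $\pi^{\rm L}_r(\cdot)$, together with independent sampling randomness. Any party holding both vectors can therefore draw $y'_r$ with exactly the same law. The plan is to identify where each vector is generated, show that moving one of them across the link is enough to co-locate both, and then confirm that the exactness argument of SPIN still holds.

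\textbf{Step 1 (locality).} In autoregressive drafting, the device's SLM computes $\pi^{\rm S}_j(\cdot\mid\mbf{x},\mbf{y}_{<j})$ for every $j\le\gamma$ as a by-product of sampling $y_j$, so the device already holds $\pi^{\rm S}_r$ once $r$ is known. In parallel verification, the server's LLM computes $\pi^{\rm L}_j$ for all $j\le\gamma+1$ in its single forward pass, so the server holds $\pi^{\rm L}_r$. Under Observation~1, acceptance checking exchanges only scalars. Either side can therefore learn the index $r$ of the first rejection at negligible cost: the server announces $r$, or alternatively it sends the scalars $\pi^{\rm L}_j(y_j)$ and the device runs the test~\eqref{eqn:acceptance_test} itself.

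\textbf{Step 2 (two realizations).} There are two ways to bring the vectors together.
\begin{itemize}
\item \emph{Server-side (UL).} The device uploads $\pi^{\rm S}_r$. The server then forms the residual, samples $y'_r$, and returns this single token.
\item \emph{Device-side (DL).} The server downloads $\pi^{\rm L}_r$. The device then forms the residual and samples $y'_r$ locally.
\end{itemize}
In both cases the sampled token has law $\mathrm{norm}(\max(0,\pi^{\rm L}_r-\pi^{\rm S}_r))$, conditioned on the same rejection event. This is because the residual's value does not depend on where it is computed, and the sampling randomness is independent of which party performs it.

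\textbf{Step 3 (exactness).} The standard SPIN guarantee that outputs follow the LLM distribution uses only the acceptance probabilities in~\eqref{eqn:acceptance_test} and the residual law in~\eqref{eqn:resample}. Both are unchanged in the two realizations, so each one preserves the target distribution.

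The main subtlety is Step 3: we must make sure that relocating the sampling does not change any conditioning. Concretely, the uniform variables $u_j$ and the correction sampler must stay mutually independent and independent of the draft. The first thing to check is that each mode draws fresh local randomness. Once that holds, the equivalence follows directly from the statement being a distributional identity.
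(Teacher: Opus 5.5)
Your proposal is correct and follows the same reasoning the paper relies on. The residual in \eqref{eqn:resample} depends only on $\pi^{\rm S}_r$, which the on-device SLM already computed while drafting, and $\pi^{\rm L}_r$, which the server-side LLM computed during verification, so transmitting either vector brings both to one place; the paper implements exactly this as UL steps 4U--7U and DL steps 4D--6D, where in DL mode the device also returns the single correction token so that both prefix copies stay synchronized. Your Step~3 spells out the distribution-preservation check that the paper leaves implicit, but it assumes the vocabulary-wide vector arrives losslessly: with the later $V^{\rm eff}$-truncation and $Q$-bit quantization, neither mode reproduces the residual in \eqref{eqn:resample} exactly in general.
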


We refer to the above two reciprocal implementations as the \ac{UL} mode and the \ac{DL} mode of \ac{SPIN}, respectively.
In practical cellular systems, the \ac{DL} rate is often higher than the \ac{UL} rate because the server typically has a larger transmit power budget than edge devices.
This makes the \ac{DL} mode more attractive for reducing the distribution-transmission delay.
However, this \ac{DL} advantage does not directly extend to multi-access systems.
When multiple devices require correction-token sampling in the same round, assigning all of them to the \ac{DL} mode forces them to compete for the server’s shared \ac{DL} transmit power, which can substantially reduce the per-device \ac{DL} rate.
In this case, shifting a suitable subset of devices to the \ac{UL} mode can better exploit available device-side \ac{UL} capabilities and significantly relieve the shared \ac{DL} burden.
Therefore, adaptive \ac{UL}/\ac{DL} mode selection becomes essential for balancing link-rate asymmetry against shared-resource competition, which motivates the work presented in this paper.

\begin{figure}[t]
	\centering
	\includegraphics[width=1\columnwidth]{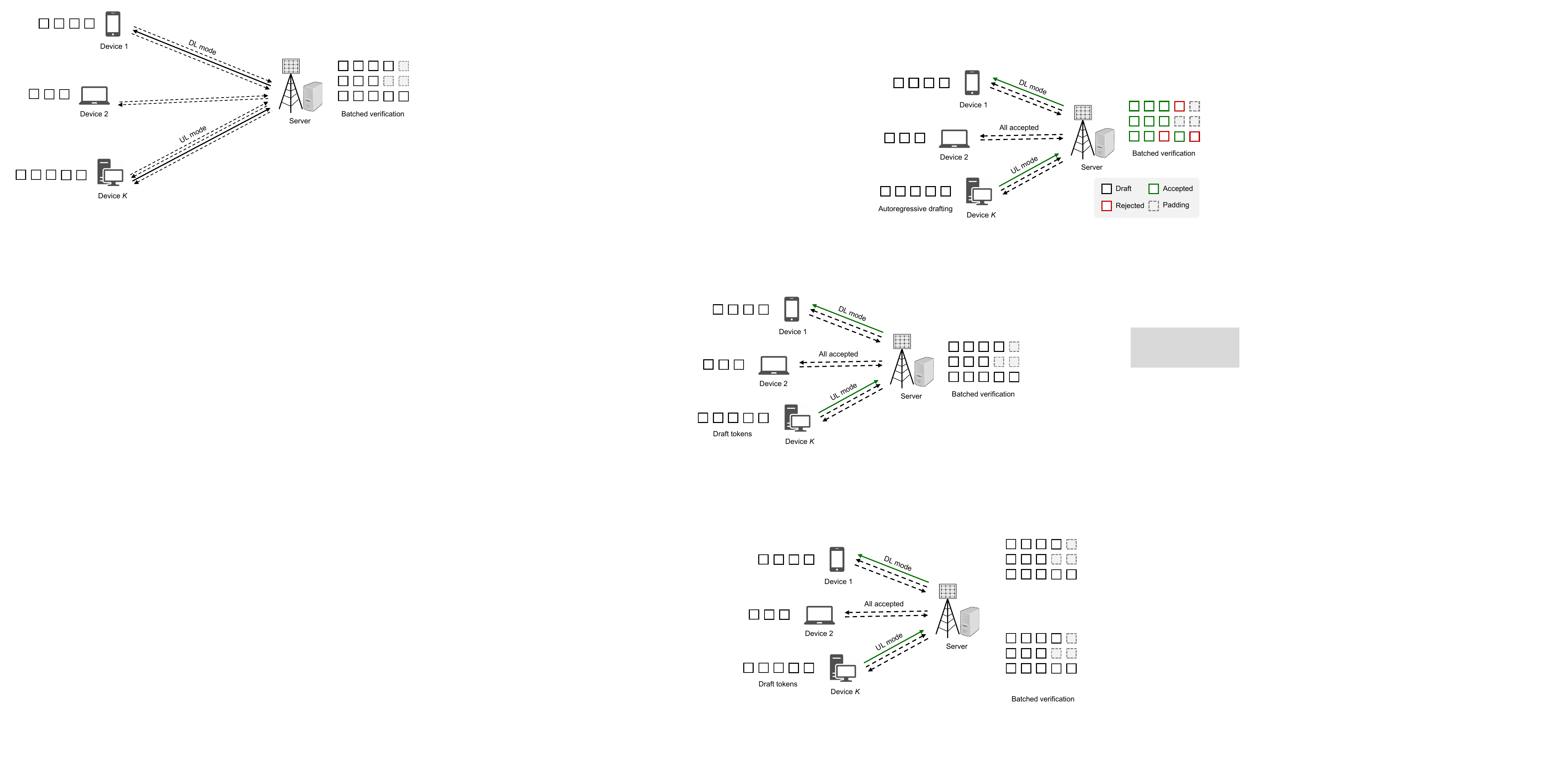}
	\caption{A \ac{Multi-SPIN} system with communication-mode selection.}
	\label{fig:system_model}
\end{figure}

\section{Communication-Mode Selection for Multi-SPIN}\label{sec:multi_spin_protocol}
As shown in Fig.~\ref{fig:system_model}, we consider a Multi-\ac{SPIN} system for \ac{LLM} inference services, where $K$ edge devices perform independent token generation tasks with the assistance of an edge server.
Specifically, each device sequentially generates multiple draft tokens using its local \ac{SLM}, and the edge server verifies them in parallel using an \ac{LLM}.
This design exploits both intra-request and inter-request parallelism: the server verifies multiple drafted tokens within each request in parallel, while simultaneously batching requests across devices~\cite{su2023synergy,zhang2025batch_done_right,xidian2025batch}.
In this way, multiple tokens across multiple devices can be accepted in a single \ac{LLM} forward pass.
After the verification step, any device whose draft is rejected must perform an additional correction procedure before it can proceed.
Such correction can be supported by exchanging the required distribution information in either the \ac{UL} mode, where the draft distribution is uploaded to the server, or the \ac{DL} mode, where the target distribution is downloaded to the device.
This flexibility reveals mode selection as a unique design dimension in \ac{Multi-SPIN}.

\subsection{Mode-Selection-Based Multi-\ac{SPIN} Protocol}\label{subsec:multi_spin_protocol}
In this subsection, we develop the mode-selection-based \ac{Multi-SPIN} protocol,
which comprises a set of common steps followed by outcome-dependent operations.
\subsubsection{Common Steps}
At the beginning of each token generation round, the server determines the draft length $\gamma_k$ for each device and informs the devices accordingly.
The protocol then proceeds through the following common steps:
\begin{enumerate}
    \item \textbf{On-device autoregressive drafting:}
    For each device $k \in \mathcal{K} \triangleq \{1, \dots, K\}$, given the current prefix $\mbf{x}_k$, the local \ac{SLM} autoregressively generates a block of $\gamma_k$ draft tokens, denoted by $\mbf{y}_k \triangleq ( y_{k,1}, \dots, y_{k, \gamma_k})$.
    Specifically, the $j$-th token is sampled from the draft distribution as
    \begin{align}
        \!\!\!y_{k,j} \sim \pi^{\rm S}_{k,j}  (\cdot) \triangleq \pi^{\rm S}_{k,j}  (\cdot \mid \mbf{x}_k, \mbf{y}_{k,<j}),  ~~j = 1, \dots, \gamma_k, 
    \end{align}
    where $\mbf{y}_{k, <j} \triangleq (y_{k,1}, \dots, y_{k, j-1})$.
    
    \item \textbf{Drafts uploading:}
    Device $k$ uploads the drafted token--probability pairs $\{y_{k,j}, \pi^{\rm S}_{k,j}(y_{k,j})\}_{j=1}^{\gamma_k}$ to the server.
    
    \item \textbf{Server-side batched parallel verification:}
    Given the current prefixes $\{\mbf{x}_k\}_{k\in\mathcal{K}}$ and the drafted blocks $\{\mbf{y}_k\}_{k\in\mathcal{K}}$ from all devices, the edge server uses the \ac{LLM} to evaluate the drafted blocks in parallel across both devices and drafted positions. 
    Specifically, for each device $k\in\mathcal K$, the \ac{LLM} outputs the target distributions for all drafted positions and the next position as
    \begin{align}
        \pi^{\rm L}_{k,j} (\cdot) \triangleq \pi^{\rm L}_{k,j} (\cdot \mid \mbf{x}_k, \mbf{y}_{k,<j}),
        ~~ j = 1,\dots,\gamma_k +1.
    \end{align}
    For each drafted token $y_{k,j}$, the server accepts it if
    \begin{align}
        u_{k,j} < \min \left(1,\frac{\pi^{\rm L}_{k,j}(y_{k,j})}{\pi^{\rm S}_{k,j}(y_{k,j})}\right),
        ~~ u_{k,j} \sim \mathrm{Unif}(0,1).
    \end{align}
\end{enumerate}

\begin{figure}[t]
	\centering
	\includegraphics[width=.66\linewidth]{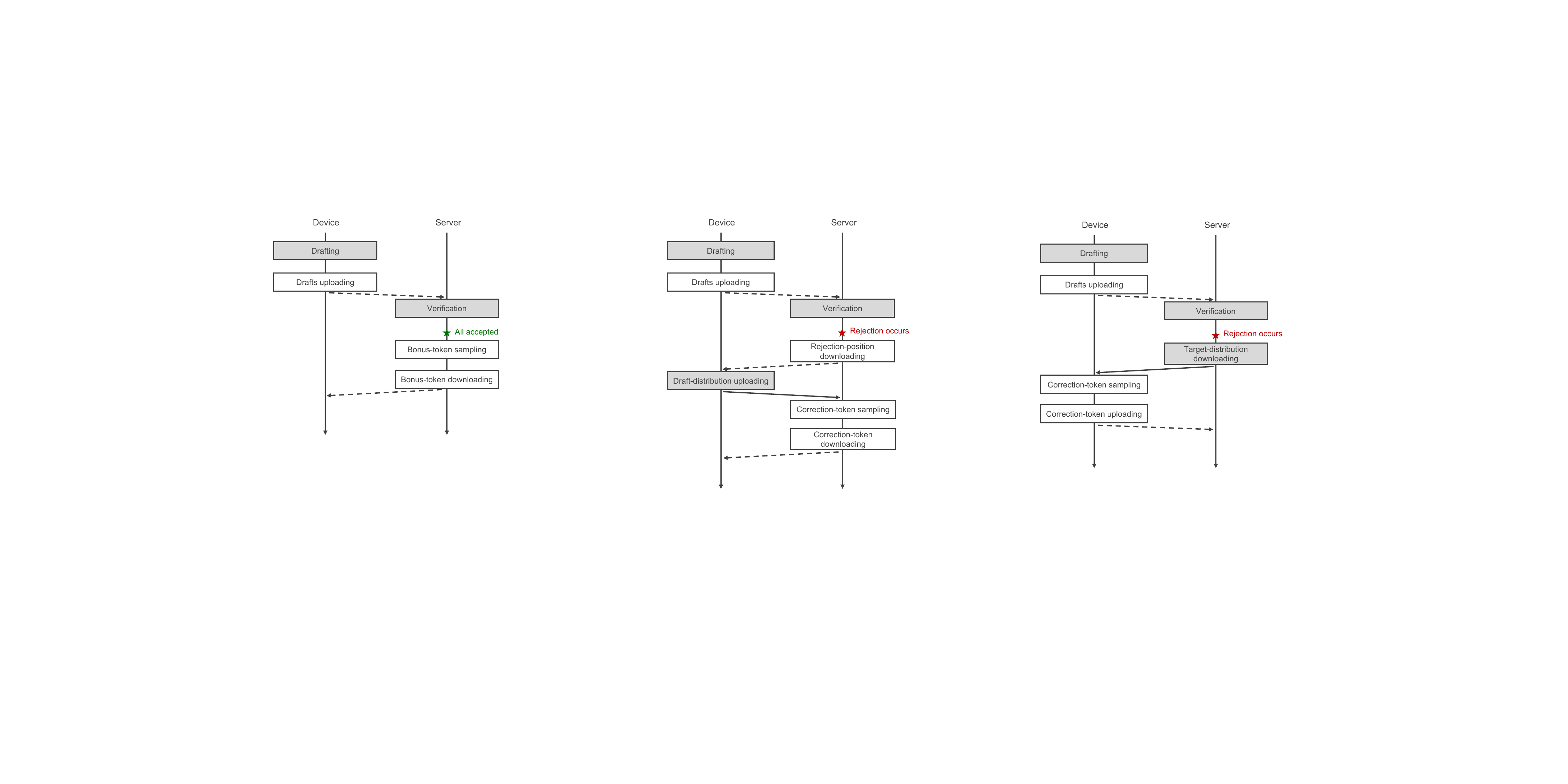}
	\caption{Protocol flow when all drafted tokens from a device are accepted.}
	\label{fig:all_accepted}
\end{figure}

Based on these token-wise acceptance decisions, we define the verification outcome for device $k$ as
\begin{align}
    o_k = \begin{cases}
        0, & \text{if all drafted tokens are accepted},\\
        1, & \text{otherwise},
    \end{cases}
\end{align}
Collecting the device-wise outcomes yields $\mbf{o} \triangleq \left[o_1, \dots, o_K\right]^\top$.
The corresponding set of devices with rejected drafts is then defined as $\mathcal{R}(\mbf{o}) \triangleq \left\{k \in \mathcal{K} \mid o_k = 1\right\}$.
For a realized verification outcome, we write $\mathcal{R}$ in place of $\mathcal{R}(\mbf{o})$ when no ambiguity arises.
The subsequent operations are described separately for devices with all drafted tokens accepted, i.e., $k \in \mathcal{K}\setminus\mathcal{R}$, and devices with at least one rejected token, i.e., $k \in \mathcal{R}$.

\subsubsection{Devices with All Drafted Tokens Accepted}
For each device $k \in \mathcal{K} \setminus \mathcal{R}$, the protocol proceeds as illustrated in Fig.~\ref{fig:all_accepted} and described below:
\begin{enumerate}[label=\arabic*A), start=4]
    \item \textbf{Bonus-token sampling:}
    The server samples an additional bonus token from the target distribution as
    \begin{align}
        y_{k, \gamma_k+1} \sim \pi^\mathrm{L}_{k, \gamma_k +1}(\cdot ).
    \end{align}
    \item \textbf{Bonus-token downloading:} 
    The server sends the bonus token $y_{k, \gamma_k+1}$ to device $k$.
    The device and server append the accepted drafted tokens and the bonus token to their respective copies of the prefix.
     The current round then terminates, and the next round starts from the updated prefix.
\end{enumerate}

\subsubsection{Devices with Rejected Drafts}
We next describe the correction procedure for devices with rejected drafts.
For each device $k\in\mathcal{R}$, the server selects its communication mode $m_k$ and associated transmit power $p_k$,
where 
\begin{align}
    m_k =
    \begin{cases}
        0, & \text{if device $k$ operates in the \ac{UL} mode},\\
        1, & \text{if device $k$ operates in the \ac{DL} mode}.
    \end{cases}
\end{align}
Here, $p_k$ denotes the \ac{UL} transmit power of device $k$ when $m_k=0$, and the \ac{DL} transmit power allocated by the server to device $k$ when $m_k=1$.

\begin{figure}[t]
	\centering
	\vspace{-.5em}
	\hspace*{0.02\linewidth}
	\subfigure[\ac{UL} mode]{
		\includegraphics[width=.7\linewidth]{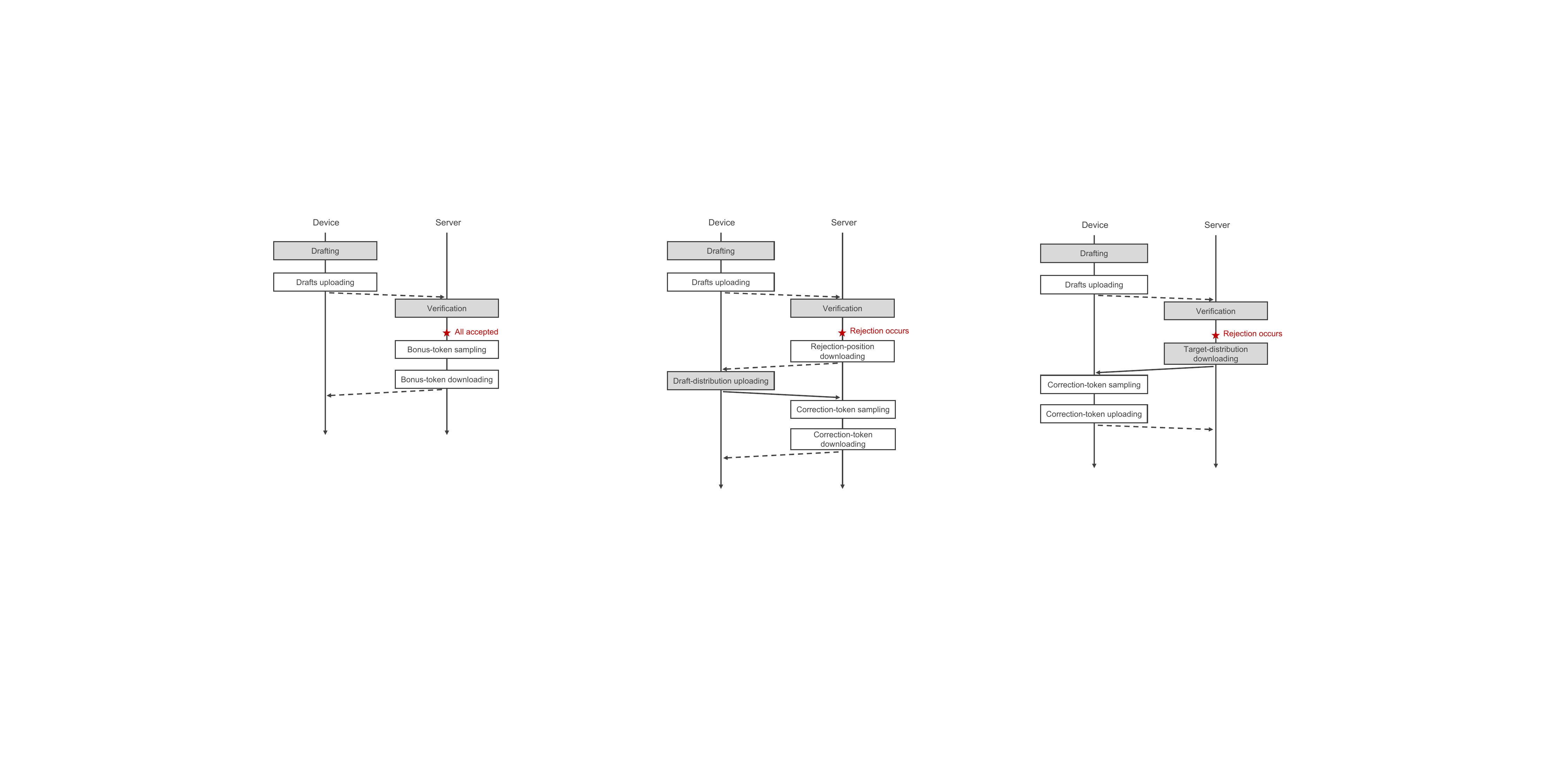}
		\label{fig:protocol_ul}
	}\\[.5em]
	\hspace*{0.02\linewidth}
	\subfigure[\ac{DL} mode]{
		\includegraphics[width=.7\linewidth]{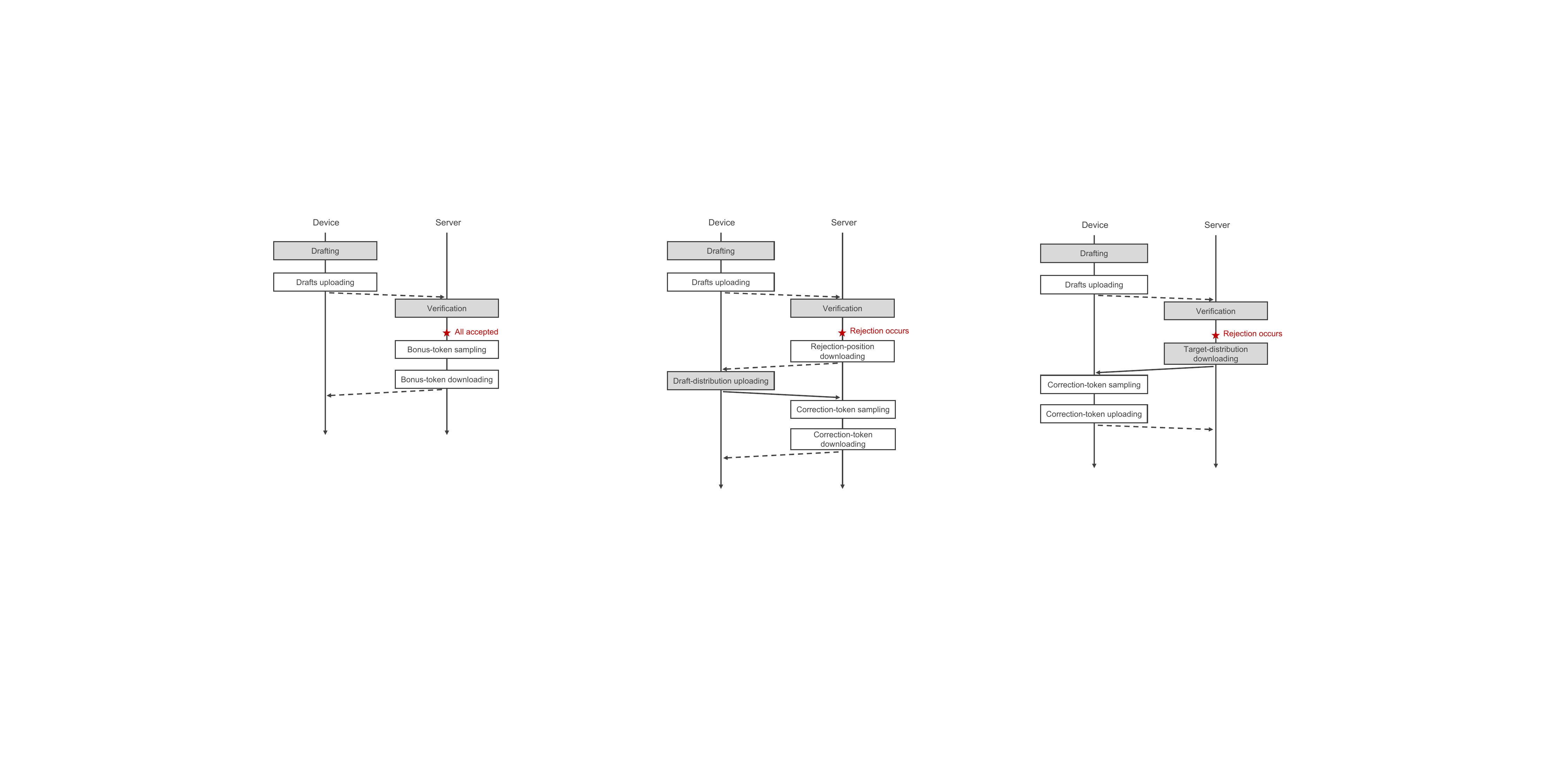}
		\label{fig:protocol_dl}
	}
	\caption{Protocol flows when a drafted token from a device is rejected, with correction performed in the (a) \ac{UL} mode and (b) \ac{DL} mode.}
	\label{fig:protocol}
\end{figure}

As illustrated in Fig.~\ref{fig:protocol_ul}, in the \ac{UL} mode, correction is performed at the server after the device uploads its draft distribution.
Devices assigned to this mode proceed as follows:
\begin{enumerate}[label=\arabic*U), start=4]
    \item \textbf{Rejection-position downloading:}
    The server sends the index $r_k$ of the first rejected token to device $k$.
    
    \item \textbf{Draft-distribution uploading:} 
    Device $k$ sends the draft distribution $\pi^{\rm S}_{k, r_k}(\cdot)$ at position $r_k$ to the server via the \ac{UL}.
    
    \item \textbf{Server-side correction:}
    The server samples the correction token at position $r_k$ from the residual distribution as
    \begin{align} \label{eqn:residual_distribution}
        y^\prime_{k, r_k} \sim \mathrm{norm}\left(\max\left(0,\, \pi^{\rm L}_{k, r_k}(\cdot)-\pi^{\rm S}_{k, r_k}(\cdot)\right)\right).
    \end{align}
    
    \item \textbf{Correction-token downloading:}
    The server sends the correction token $y^\prime_{k, r_k}$ to device $k$.
    The device and server append the accepted drafted tokens before position $r_k$ and the correction token to their respective copies of the prefix.
     The current round then terminates, and the next round starts from the updated prefix.
\end{enumerate}

As illustrated in Fig.~\ref{fig:protocol_dl}, in the \ac{DL} mode, correction is performed at the device after it receives the target distribution from the server.
Devices assigned to this mode proceed as follows:
\begin{enumerate}[label=\arabic*D), start=4]
    \item \textbf{Target-distribution downloading:}
    The server sends the first rejection position $r_k$ and the corresponding target distribution $\pi^{\rm L}_{k,r_k}(\cdot)$ to device $k$ via the \ac{DL}.
    
    \item \textbf{On-device correction:}
    Device $k$ samples the correction token at position $r_k$ from the residual distribution in~\eqref{eqn:residual_distribution}.
    
    \item \textbf{Correction-token uploading:}
    Device $k$ sends the correction token $y^\prime_{k, r_k}$ to the server.
    The device and server append the accepted drafted tokens before position $r_k$ and the correction token to their respective copies of the prefix.
     The current round then terminates, and the next round starts from the updated prefix.
\end{enumerate}

\begin{figure}[t]
	\centering	\includegraphics[width=.85\columnwidth]{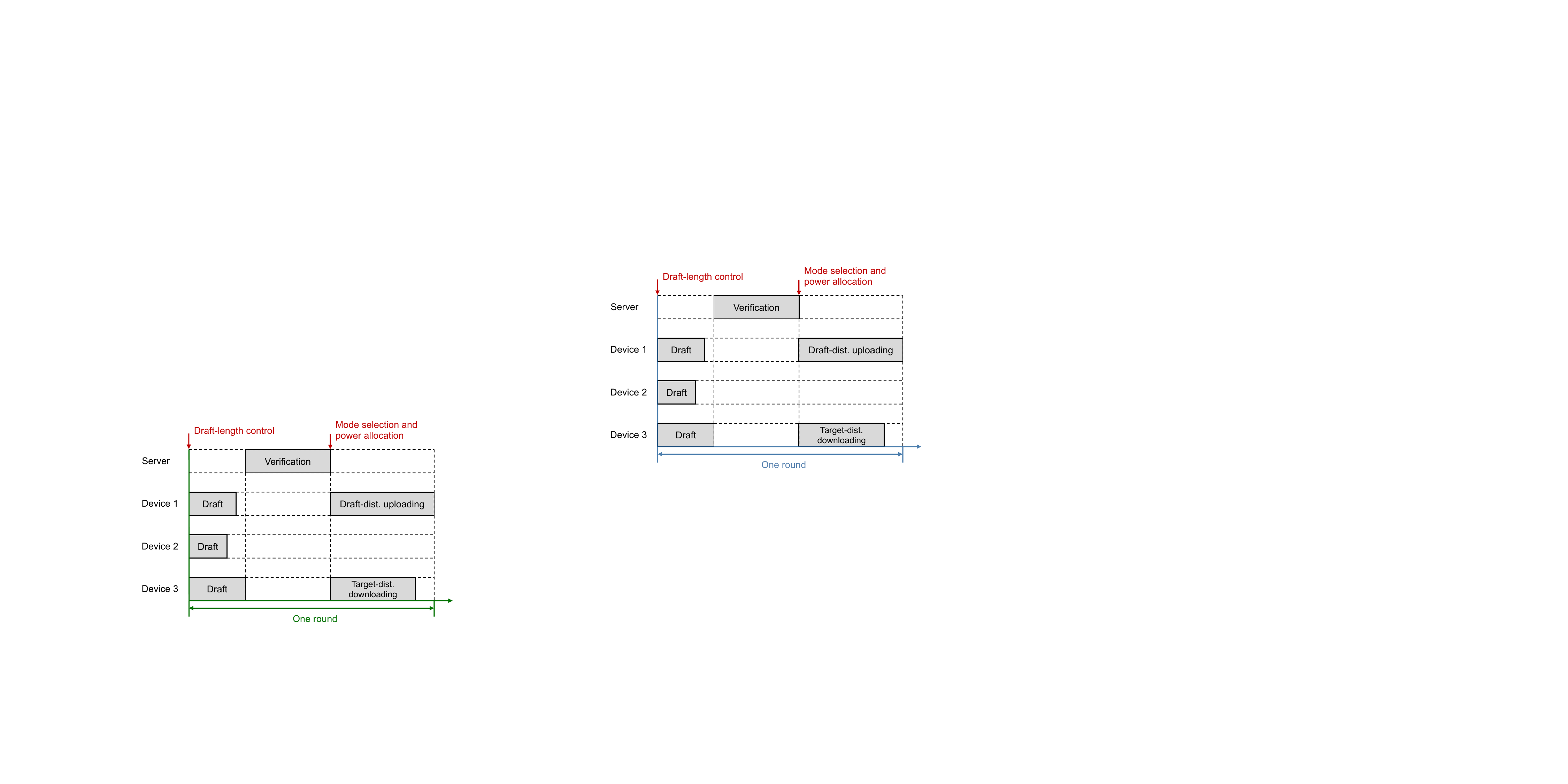}
    \vspace{-.3em}
	\caption{Timeline of one synchronized round in the mode-selection-based \ac{Multi-SPIN} protocol.
    For clarity, lightweight signaling steps are omitted.
    }
	\label{fig:multi_spin_protocol}
\end{figure}

Fig.~\ref{fig:multi_spin_protocol} illustrates the round-level timing of the proposed protocol under a synchronized implementation. 
At the beginning of each round, the server proactively selects the draft length $\gamma_k$ for each device $k$ and informs the devices accordingly.
The devices then perform local drafting, while the server waits until all drafted token--probability pairs have been received before conducting batched parallel verification.
After the verification outcomes are revealed, only devices with rejected drafts participate in distribution transmission, and their communication modes $m_k$ and transmit powers $p_k$ are determined reactively. 
For clarity, Fig.~\ref{fig:multi_spin_protocol} omits the lightweight signaling steps, such as drafts uploading, rejection-position feedback, bonus-token downloading, and correction-token exchange, whose overhead is negligible compared with vocabulary-wide distribution transmission. 
This proactive–reactive timing structure forms the basis of the two-stage optimization developed later.

\subsection{Computation Throughput and Latency}
\subsubsection{Expected Number of Output Tokens}
We denote the draft-length vector by $\bsm{\gamma} \triangleq [\gamma_1,\ldots,\gamma_K]^\top$.
Let $L_k(\gamma_k)$ denote the number of output/committed tokens generated for device $k$ in one round, including the accepted drafted tokens and one additional bonus or correction token.
The total number of output tokens across all devices is 
\begin{align}
    L(\bsm{\gamma}) \triangleq \sum_{k\in\mathcal{K}} L_{k}(\gamma_k).
\end{align}
Following~\cite{SDgoogle2023icml, SDdeepmind2023}, we assume that, for each device $k$, the acceptance events of its drafted tokens are \ac{i.i.d.} with acceptance probability $\alpha_k$, and are independent across devices.
Under this model, the expected number of output tokens in one \ac{Multi-SPIN} round is given by~\cite{SDgoogle2023icml}
\begin{align}\label{eqn:expected_output_tokens}
    \bar{L}(\bsm{\gamma}) \triangleq \mathbb{E}[L(\bsm{\gamma})] = \sum_{k\in\mathcal{K}} \mathbb{E}[L_{k}(\gamma_k)]
    = \sum_{k\in\mathcal{K}} \frac{1 - \alpha_k^{\gamma_k+1}}{1 - \alpha_k}.
\end{align}

\subsubsection{Computation Latency}
We characterize the computation latency of local drafting and server-side verification as follows.
Device $k$ autoregressively generates $\gamma_k$ draft tokens, incurring a drafting latency of $\gamma_k T_k^{\mathrm{dft}}$, where $T_k^{\mathrm{dft}}$ is the average per-token generation latency.
Since the server waits for all devices before verification, the overall drafting latency is
\begin{align}
    T^{\mathrm{dft}}(\bsm{\gamma}) = \max_{k \in \mathcal{K}}~ \left\{\gamma_k T_k^{\mathrm{dft}} \right\}.
\end{align}

Verification is then performed by the server-side \ac{LLM} in a batched manner.
As shown in Fig.~\ref{fig:system_model}, the \ac{LLM} forward pass exploits parallelism along two dimensions: draft positions within each request and requests across devices.
For the moderate draft lengths commonly used in \ac{SPIN}, the verification latency depends only weakly on the draft length and is primarily affected by the batch size, i.e., the number of device requests verified together~\cite{su2023synergy,zhang2025batch_done_right,xidian2025batch}.
Following~\cite{zheng2026multispin}, we model the batched verification latency as
\begin{align}\label{eqn:verification_latency}
    T^{\mathrm{ver}} = T_0^{\mathrm{ver}} + KT_1^{\mathrm{ver}},
\end{align}
where $T_0^{\mathrm{ver}}$ captures the fixed latency of one batched \ac{LLM} verification pass, and $T_1^{\mathrm{ver}}$ captures the incremental latency introduced by each additional device in the batch.

The total computation latency in one round is therefore
\begin{align}
    T^\mathrm{comp}(\bsm{\gamma}) = T^\mathrm{dft} (\bsm{\gamma}) + T^\mathrm{ver}.
\end{align}

\subsection{Communication Overhead and Latency}
\subsubsection{Communication Overhead}
As discussed earlier, the distribution transmission required for correction-token sampling constitutes the dominant communication overhead in \ac{Multi-SPIN}.
We therefore focus on this cost and neglect the signaling overhead of other protocol operations.

Upon a rejection, the selected communication mode determines whether the draft distribution or the target distribution is transmitted.
Since both distributions are defined over the same vocabulary, transmitting either distribution incurs the same overhead.
For typical \acp{LLM}, the vocabulary size is on the order of tens of thousands or even larger; for example, Qwen2.5 has 152{,}064 vocabulary entries~\cite{qwen2024qwen25_14b_config}, while GPT-4o has 200{,}019 vocabulary entries~\cite{openai2024cookbook_tiktoken}.
In practice, since many tokens have near-zero probabilities~\cite{guangyizhang2026quantize}, token distributions are effectively sparse after truncation.
Hence, it suffices to transmit only the effective support of the distribution, whose size is typically on the order of hundreds or thousands.
Let $V$ denote the vocabulary size, $V^{\mathrm{eff}}$ the effective support size, and $Q$ the number of quantization bits used for each probability value.
Each transmitted entry contains a token index and its quantized probability, requiring $\lceil \log_2 V\rceil + Q$ bits.
Accordingly, for each device with a rejected draft, the distribution-transmission overhead in one round is $V^{\mathrm{eff}} \left(\lceil \log_2 V\rceil + Q\right)$ bits.

\subsubsection{Communication Latency}
We consider an \ac{OFDMA} system in which each device $k \in \mathcal{R}$ is allocated an orthogonal bandwidth $B_k$.
The allocated bandwidth is used for either \ac{UL} or \ac{DL} distribution transmission according to the selected communication mode.
For simplicity, the bandwidth allocation is treated as given.
Each device and the server are equipped with a single antenna.
Let $h_k$ denote the channel coefficient between device $k$ and the server.
We adopt a block-fading channel model, under which $h_k$ remains constant within each \ac{Multi-SPIN} round and varies independently across rounds.
The channel coefficients are assumed known at the server at the beginning of each round, before the draft-length decision is made.
Accordingly, the optimization in the sequel is conditioned on the current channel realization, whose dependence is suppressed for notational simplicity.
The achievable communication rate of device $k$ is
\begin{align}
    R_{k} (m_k, p_k) = \begin{cases}
        B_k \log_2 \left(1+\frac{p_{k} |h_k|^2}{N_0^\mathrm{ser} B_k}\right), & m_k = 0 \\
        B_k \log_2 \left(1+\frac{p_{k} |h_k|^2}{N_{0, k}^\mathrm{dev} B_k}\right), & m_k = 1,
    \end{cases}
\end{align}
where $N_0^\mathrm{ser}$ and $N_{0,k}^\mathrm{dev}$ denote the noise power spectral density at the server and device $k$, respectively.
	
For each device $k \in \mathcal{R}$, the latency for transmitting the required distribution is
\begin{align}
	T_k^\mathrm{comm} (m_k,p_k) = 
	\frac{V^{\rm eff}(\lceil \log_2 V\rceil + Q)}{R_k(m_k, p_k)}.
\end{align}
Since the distribution transmissions of the devices in $ \mathcal{R}$ are performed simultaneously over orthogonal subcarriers, the overall communication latency is determined by the slowest device, i.e.,
\begin{align}
	T^\mathrm{comm} (\mathsf{A}; \mbf{o}) = \max_{k \in \mathcal{R}(\mbf{o})} ~T_k^\mathrm{comm} (m_k, p_k),
\end{align}
where $\mathsf{A} \triangleq \left(\mbf{m}_{\mathcal{R}}, \mbf{p}_{\mathcal{R}}\right)$ denotes the reactive communication action, with $\mbf{m}_{\mathcal{R}}\triangleq[m_k]_{k\in\mathcal{R}}$ and $\mbf{p}_{\mathcal{R}}\triangleq[p_k]_{k\in\mathcal{R}}$ collecting the mode-selection and power-allocation decisions for the devices with rejected drafts, respectively.
When the rejection set is clear from the context, we omit the subscript $\mathcal R$ and simply write $\mathbf m$ and $\mathbf p$.
When $\mathcal R(\mathbf o)=\emptyset$, all devices have their drafted tokens accepted and no distribution transmission is required.
Accordingly, we define $T^{\mathrm{comm}}(\mathsf A;\mathbf o)=0$.

\subsection{Token-Goodput Maximization Problem}
We now formulate the joint optimization of communication modes, draft lengths, and transmit powers to maximize the expected sum token goodput.
As described in Section~\ref{subsec:multi_spin_protocol}, these decisions are made at different stages of each generation round.
Specifically, the draft lengths are selected proactively at the beginning of each round before the verification outcomes are known.
After the devices with rejected drafts are identified, their communication modes and transmit powers are determined reactively.
This sequential decision process naturally leads to a two-stage proactive--reactive optimization problem.

For a given draft-length vector $\boldsymbol{\gamma}$, verification outcome $\mbf{o}$, and reactive action $ \mathsf{A}$, the latency of one \ac{Multi-SPIN} round is
\begin{align}
	T(\bsm{\gamma}, \mathsf{A}; \mbf{o}) = T^\mathrm{comp}(\bsm{\gamma}) + T^\mathrm{comm} (\mathsf{A}; \mbf{o}).
\end{align}
Because $\bsm{\gamma}$ is selected before verification, the round latency remains unknown when the draft-length decision is made.
In particular, the verification outcome $\mathbf{o}$, whose distribution depends on $\boldsymbol{\gamma}$, determines which devices require correction and hence the incurred communication latency.
Therefore, the performance of a draft-length decision should be evaluated in terms of the expected round latency over all possible verification outcomes.
To characterize this expected latency, we define a reactive policy $\chi$ that maps each realized verification outcome to a feasible communication action, i.e., $\mathsf{A} = \chi (\mbf{o})$.
Then, the expected round latency under $\bsm{\gamma}$ and $\chi$ is
\begin{align}
	\bar{T}(\bsm{\gamma}, \chi) 
	&\triangleq \mathbb{E}_{\mbf{o}|\bsm{\gamma}} \! \left[T\big(\bsm{\gamma}, \chi (\mbf{o}); \mbf{o}\big)\right] \nonumber \\
	&= T^\mathrm{comp}(\bsm{\gamma}) + \mathbb{E}_{\mbf{o}|\bsm{\gamma}} \! \left[T^\mathrm{comm} \big(\chi(\mbf{o}); \mbf{o}\big)\right],
\end{align}
where the expectation is taken over the random verification outcomes $\mbf{o}$ induced by $\bsm{\gamma}$.

Combining the expected number of output tokens in~\eqref{eqn:expected_output_tokens} with the expected round latency, we define the expected sum token goodput under the proactive draft-length vector $\boldsymbol{\gamma}$ and the reactive communication policy $\chi$ as
\begin{align}
	G(\bsm{\gamma}, \chi) \triangleq \frac{\bar{L}(\bsm{\gamma})}{\bar{T}(\bsm{\gamma}, \chi)}.
\end{align}

We next specify the feasibility constraints on the proactive and reactive decisions.
The draft length of each device must satisfy
\begin{align}
	\gamma_k \in \mathbb{Z}_{\geq 0},
	\quad
	\forall k \in \mathcal{K},
\end{align}
where $\mathbb{Z}_{\geq 0}$ denotes the set of nonnegative integers.
For each realized verification outcome $\mathbf{o}$, the reactive policy $\chi$ selects a communication action $\chi(\mathbf{o})=(\mathbf{m},\mathbf{p})$ from the feasible action region
\begin{align}
	\mathcal{A}(\mbf{o})
	\triangleq
	\left\{
	(\mbf{m}, \mbf{p})
	\;\middle|
	\begin{array}{ll}
		m_k \in \{0,1\}, 
		& \forall k \in \mathcal{R}(\mbf{o}), \\[1pt]
		p_k \geq 0, 
		& \forall k \in \mathcal{R}(\mbf{o}), \\[1pt]
		(1-m_k)p_k \leq P_k^{\mathrm{ul}}, 
		& \forall k \in \mathcal{R}(\mbf{o}), \\[1pt]
		\sum_{k \in \mathcal{R}} m_k p_k
		\leq P^{\mathrm{dl}}.
		&
	\end{array}
	\right\}.
\end{align}
The third constraint limits the transmit power of device $k$ when it operates in the \ac{UL} mode, whereas the last constraint imposes the shared \ac{DL} power budget at the server.
Accordingly, the set of feasible reactive policies is defined as
\begin{align}
	\mathcal{X} \triangleq \left\{\chi \mid \chi (\mbf{o}) \in \mathcal{A}(\mbf{o}), \forall \mbf{o} \in \{0, 1\}^K \right\}.
\end{align}

The joint optimization of proactive draft lengths and reactive communication decisions is therefore formulated as the following two-stage expected sum-token-goodput maximization problem:
\begin{subequations}
	\begin{alignat}{2}
		\max_{\bsm{\gamma}, \chi}\quad
		& G(\bsm{\gamma},\chi)
		\\
		\operatorname{ s.t. } \quad
		& \gamma_k \in \mathbb{Z}_{\geq 0},
		&\quad&
		\forall k \in \mathcal{K},
		\\
		& \chi \in \mathcal{X}.
		&\quad&
	\end{alignat}
\end{subequations}
The two stages are coupled through the verification outcomes.
Specifically, the proactive draft-length vector $\boldsymbol{\gamma}$ determines the distribution of verification outcomes $\mathbf{o}$, while each realized outcome determines the set of devices requiring correction and hence the corresponding reactive communication problem.
This structure allows the original problem to be decomposed, without loss of optimality, into two subproblems.
The first is a reactive mode-selection and power-allocation subproblem solved for each realized outcome.
The second is a proactive draft-length control subproblem that incorporates the optimized reactive decisions over all possible outcomes.

\subsubsection{Reactive Mode Selection and Power Allocation}
We first derive the reactive communication problem for a predetermined draft-length vector $\boldsymbol{\gamma}$.
In this case, both the expected number of output tokens $\bar{L}(\bsm{\gamma})$ and the computation latency $T^\mathrm{comp}(\bsm{\gamma})$ are independent of the reactive policy $\chi$.
Therefore, maximizing the expected sum token goodput over $\chi$ is equivalent to minimizing the expected communication latency:
\begin{align}
	\min_{\chi\in\mathcal{X}} ~\mathbb{E}_{\mbf{o}|\bsm{\gamma}} \! \left[ T^{\mathrm{comm}} \bigl( \chi(\mbf{o}); \mbf{o} \bigr) \right].
\end{align}
The minimization is separable across verification outcomes, and the optimal reactive action can be determined independently for each realization.
It follows that
\begin{align}
	\min_{\chi\in\mathcal{X}} ~\mathbb{E}_{\mbf{o}|\bsm{\gamma}} \! \left[ T^{\mathrm{comm}} \bigl( \chi(\mbf{o}); \mbf{o} \bigr) \right]
	= 
	\mathbb{E}_{\mbf{o}|\bsm{\gamma}} \! \left[ \min_{\mathsf{A}\in\mathcal{A}(\mbf{o})} T^{\mathrm{comm}} \bigl( \mathsf{A}; \mbf{o} \bigr) \right].
\end{align}
Accordingly, for a realized outcome with $\mathcal R(\mathbf o) \neq \emptyset$, the optimal reactive action can be obtained by solving
\begin{subequations}
	\begin{align}
		T^\mathrm{comm}_{\min}\left(\mbf{o}\right)
		\triangleq \min_{\mbf{m}, \mbf{p}}  \quad
		&\max_{k \in \mathcal{R}(\mbf{o})} ~~ T_{k}^\mathrm{comm} (m_k, p_k) \\ 
		\operatorname{ s.t. }  \quad
		& (\mbf{m}, \mbf{p}) \in \mathcal{A}(\mbf{o}).
	\end{align}
\end{subequations}
When $\mathcal R(\mathbf o)=\emptyset$, we simply define $T_{\min}^{\mathrm{comm}}(\mathbf o)=0$, and no reactive communication decision is required.
The above formulation leads to a mixed-integer optimization problem, in which the binary mode-selection variables $\mathbf{m}$ are jointly optimized with the continuous power-allocation variables $\mathbf{p}$.
The shared \ac{DL} power constraint further couples the decisions of all devices assigned to the \ac{DL} mode.
Moreover, the number of possible mode assignments grows exponentially with the number of devices, making exhaustive search computationally prohibitive.


\subsubsection{Proactive Draft-Length Control}
Under the optimal reactive policy, the communication latency associated with each verification outcome $\mathbf{o}$ is $T_{\min}^{\mathrm{comm}}(\mathbf{o})$.
Hence, for a given draft-length vector $\boldsymbol{\gamma}$, the minimum expected round latency is
\begin{align} \label{eqn:min_expected_round_latency}
	T^{\mathrm{comp}}(\boldsymbol{\gamma})
	+
	\mathbb{E}_{\mbf{o}\mid\boldsymbol{\gamma}}
	\left[
	T_{\min}^{\mathrm{comm}}(\mbf{o})
	\right].
\end{align}
Substituting~\eqref{eqn:min_expected_round_latency} into the joint formulation yields the proactive draft-length control problem:
\begin{subequations}
	\begin{alignat}{2}
		\max_{\bsm{\gamma}}\quad
		& \frac{\bar{L}(\bsm{\gamma})}{T^\mathrm{comp}(\bsm{\gamma}) + \mathbb{E}_{\mbf{o}|\bsm{\gamma}} \! \left[T^\mathrm{comm}_{\min}\left(\mbf{o}\right)\right]}
		\\
		\operatorname{ s.t. } \quad
		& \gamma_k \in \mathbb{Z}_{\geq 0}, \quad \forall k \in \mathcal{K}.
	\end{alignat}
\end{subequations}
This formulation captures the fundamental tradeoff in draft-length control. 
Longer drafts can increase the expected number of output tokens, but they also reshape the verification-outcome distribution and affect both the computation latency and the expected communication latency.
However, solving this stochastic optimization problem is generally challenging.
In particular, direct evaluation of the objective requires averaging over an exponentially large outcome space, and for each outcome, solving the associated reactive communication problem to obtain $T_{\min}^{\mathrm{comm}}(\mathbf{o})$.
This tight coupling renders draft-length optimization intrinsically dependent on the underlying reactive problem.

\section{Reactive Mode Selection and Power Allocation}\label{sec:reactive_mode_power}
In this section, we solve the reactive mode-selection and power-allocation problem under a given verification outcome.
We first derive the optimal power allocation for any fixed mode-selection strategy.
Then, by exploiting structural properties of the optimal mode selection, we reduce the original combinatorial problem to a one-dimensional bisection search, which yields the optimal mode selection and power allocation jointly.

\subsection{Optimal Power Allocation}
For a fixed mode selection,
let $\mathcal{U} \triangleq \left\{k \in \mathcal{R}\mid m_k = 0\right\}$ and $\mathcal{D} \triangleq \left\{k \in \mathcal{R}\mid m_k = 1\right\}$ denote the corresponding \ac{UL} and \ac{DL} user sets, respectively.
We have 
$\mathcal{U} \cup \mathcal{D} = \mathcal{R}$.
For a nonempty \ac{UL} user set $\mathcal U$, each device transmits at its maximum available power to minimize the transmission latency, which is given by
\begin{align}
    \tau^{\mathrm{ul}} = \max_{k \in \mathcal{U}} ~\left\{ \frac{V^{\rm eff}(\lceil \log_2 V\rceil + Q)}{B_k \log_2 \left(1+\frac{P_{k}^\mathrm{ul} |h_k|^2}{N_0^\mathrm{ser} B_k}\right)} \right\}.
\end{align}

For a nonempty \ac{DL} user set $\mathcal D$, the objective is to minimize the maximum user delay through \ac{DL} power allocation under the sum-power constraint:
\begin{subequations}\label{problem:p_dl}
    \begin{align}
        \min_{\mbf{p}_\mathcal{D}} \quad&  
        \max_{k\in\mathcal{D}} ~\left\{ \frac{ V^{\rm eff}(\lceil \log_2 V\rceil + Q)}{B_k \log_2 \left(1+\frac{p_{k} |h_k|^2}{N_{0, k}^\mathrm{dev} B_k}\right)} \right\}  \\ 
        \operatorname{ s.t. } \quad
        & p_k \geq 0, ~~\forall k \in \mathcal{D}, \\
        & \sum_{k \in \mathcal{D}} p_k \leq P^\mathrm{dl},
    \end{align}
\end{subequations}
where $\mbf{p}_{\mathcal{D}} \triangleq [p_k]_{k\in\mathcal{D}}$ collects the transmit powers for devices operating in the \ac{DL} mode.
The optimal \ac{DL} power allocation admits a closed-form solution parameterized by the resulting \ac{DL} delay $\tau^{\mathrm{dl}}$.
The result is summarized in the following lemma.
\begin{lemm}[Optimal \ac{DL} Power Allocation] \label{lemm:opt_dl_power_allocation}
    For any given nonempty \ac{DL} user set $\mathcal{D} \neq \emptyset$, problem \eqref{problem:p_dl} admits an optimal solution of the form
    \begin{align}
        p_k^\star = \frac{N_{0, k}^\mathrm{dev} B_k}{|h_k|^2} \left(2^\frac{V^{\rm eff}\left(\lceil \log_2 V\rceil + Q\right)}{B_k \tau^{\mathrm{dl}}} - 1\right), ~~\forall k \in \mathcal{D},
    \end{align}
    where $\tau^\mathrm{dl} > 0$ is the unique solution to
    \begin{align}
        \sum_{k \in \mathcal{D}} \frac{N_{0, k}^\mathrm{dev} B_k}{|h_k|^2} \left(2^\frac{V^{\rm eff}\left(\lceil \log_2 V\rceil + Q\right)}{B_k \tau^{\mathrm{dl}}} - 1\right) = P^\mathrm{dl}.
    \end{align}
    The resulting minimum \ac{DL} delay $\tau^{\mathrm{dl}}$ can be efficiently computed via a bisection search.
\end{lemm}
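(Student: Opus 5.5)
Our plan is to reformulate the min--max problem~\eqref{problem:p_dl} in epigraph form and exploit monotonicity. Write $S \triangleq V^{\rm eff}(\lceil \log_2 V\rceil + Q)$ and, for each $k\in\mathcal D$, define the per-device delay $t_k(p_k) = S/\bigl(B_k\log_2(1+p_k|h_k|^2/(N_{0,k}^{\rm dev}B_k))\bigr)$. This is continuous and strictly decreasing in $p_k>0$, with $t_k\to\infty$ as $p_k\to 0^+$ and $t_k\to 0$ as $p_k\to\infty$. Inverting, the constraint $t_k(p_k)\le \tau$ is equivalent to $p_k \ge g_k(\tau) \triangleq \frac{N_{0,k}^{\rm dev}B_k}{|h_k|^2}\bigl(2^{S/(B_k\tau)}-1\bigr)$. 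The problem therefore becomes: minimize $\tau$ subject to $p_k\ge g_k(\tau)$ for all $k$ and $\sum_k p_k\le P^{\rm dl}$. For a fixed $\tau$, a feasible power vector exists if and only if $F(\tau)\triangleq\sum_{k\in\mathcal D} g_k(\tau)\le P^{\rm dl}$, since the cheapest choice is $p_k=g_k(\tau)$.

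Next we study $F$ on $(0,\infty)$. Each $g_k$ is continuous and strictly decreasing in $\tau$, because the exponent $S/(B_k\tau)$ decreases. Each $g_k$ tends to $+\infty$ as $\tau\to0^+$ and to $0$ as $\tau\to\infty$. Hence $F$ is a continuous, strictly decreasing bijection from $(0,\infty)$ onto $(0,\infty)$. Since $P^{\rm dl}>0$, there is a unique $\tau^{\rm dl}>0$ with $F(\tau^{\rm dl})=P^{\rm dl}$. The feasible set of $\tau$ is exactly $[\tau^{\rm dl},\infty)$, so the optimal value is $\tau^{\rm dl}$.

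Finally we show that $p_k^\star=g_k(\tau^{\rm dl})$ attains this value. This allocation is nonnegative, uses exactly $P^{\rm dl}$, and gives every device delay exactly $\tau^{\rm dl}$, so its max-delay equals $\tau^{\rm dl}$. Optimality follows by contradiction: any feasible $\mathbf p_{\mathcal D}$ with max-delay $\tau'<\tau^{\rm dl}$ would need $\sum_k p_k \ge F(\tau')>F(\tau^{\rm dl})=P^{\rm dl}$, which violates the power budget. The bisection claim follows from strict monotonicity of $F$. An initial bracket is obtained by taking a small $\tau$ with $F>P^{\rm dl}$ and a large one with $F<P^{\rm dl}$; both exist by the limits above.

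The main point needing care is the boundary behavior and the zero-power case. If $p_k=0$, the delay is infinite, so it is never optimal, and the epigraph equivalence holds with $\tau$ ranging over $(0,\infty)$. Confirming the limits of $F$, and hence existence and uniqueness of $\tau^{\rm dl}$, is the only nonroutine step, and it is elementary.
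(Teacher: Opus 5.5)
Your proof is correct and follows the same max-delay equalization idea the paper invokes. The paper only cites the classical argument and omits the derivation; you supply it in full via the epigraph reformulation, the inverse map $g_k(\tau)$, and the strict monotonicity and boundary limits of $F(\tau)=\sum_{k\in\mathcal D} g_k(\tau)$, which together give existence and uniqueness of $\tau^{\rm dl}$, optimality of $p_k^\star=g_k(\tau^{\rm dl})$, and a valid bisection bracket.
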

\begin{IEEEproof}
    The proof follows the max-delay equalization argument for power allocation under a sum-power constraint; see, e.g., classical results in convex optimization~\cite{boyd2004convex} and wireless max-min fairness power control~\cite{tse2005fundamentals}.
    The detailed derivations are omitted for brevity.
\end{IEEEproof}

\subsection{Optimal Mode Selection and Power Allocation}
We now turn to the optimization of the mode-selection vector $\mbf{m}$.
Although the original problem is binary and combinatorial, the optimal solution has a prefix structure after sorting the devices according to their minimum \ac{UL} delays.
Importantly, the mode-selection search is performed using the optimized delay values obtained from Lemma~\ref{lemm:opt_dl_power_allocation}; hence the power allocation is embedded in each mode-selection evaluation.

For each device $k$, define its minimum \ac{UL} delay as
\begin{align}
    \tau_k^\mathrm{ul} \triangleq \frac{V^{\rm eff}(\lceil \log_2 V\rceil + Q)}{B_k \log_2 \left(1+\frac{P_{k}^\mathrm{ul} |h_k|^2}{N_0^\mathrm{ser} B_k}\right)}.
\end{align}
Without loss of generality, upon observing the verification outcomes, we re-index the devices in $\mathcal R$ as $\{1,\ldots,|\mathcal R|\}$.
This allows us to focus exclusively on the subset of devices that require correction in the subsequent analysis.
We further sort them in non-decreasing order of their minimum \ac{UL} delays:
\begin{align}\label{eqn:ul_sorting}
    \tau_1^\mathrm{ul} \leq \tau_2^\mathrm{ul} \leq \dots \leq \tau_{|\mathcal{R}|}^\mathrm{ul}.
\end{align}
For any $n \in \{0,1,\dots, |\mathcal{R}|\}$, define
\begin{align}
    \mathcal{U} (n) & \triangleq \{1,2,\dots, n\},\\
    \mathcal{D} (n) & \triangleq \{n+1, n+2, \dots, |\mathcal{R}|\}.
\end{align}
That is, the first $n$ devices in the above ordering are assigned to the \ac{UL} mode, and the remaining $|\mathcal{R}|-n$ devices are assigned to the \ac{DL} mode.
For a given $n$, the minimum feasible \ac{UL} delay is
\begin{align}
    \tau^\mathrm{ul}(n) \triangleq \max_{k \in \mathcal{U}(n)} \tau_k^\mathrm{ul} = \tau_n^\mathrm{ul}, ~~ n \geq 1,
\end{align}
with the convention $\tau^\mathrm{ul}(0) = 0$.
Similarly, let $\tau^\mathrm{dl}(n)$ denote the optimized \ac{DL} delay obtained by applying Lemma~\ref{lemm:opt_dl_power_allocation} to the nonempty \ac{DL} user set $\mathcal D(n)$, with the convention $\tau^{\mathrm{dl}}(|\mathcal R|) = 0$.
The resulting optimized communication delay under this partition is therefore
\begin{align}
    \tau (n) = \max \{\tau^\mathrm{ul}(n), \tau^\mathrm{dl} (n)\}.
\end{align}

The following lemma shows that it suffices to restrict the search to this family of prefix-structured partitions to obtain the optimal mode-selection solution.

\begin{lemm}[Prefix Structure of Optimal Mode Selection]\label{lemm:prefix_mode_selection}
    There exists an optimal mode selection such that the \ac{UL} device set is of the prefix form $\mathcal{U}(n) = \{1,2, \dots, n\}$
    for some $n \in \{0, 1, \dots, |\mathcal{R}|\}$.
    The corresponding \ac{DL} device set is $\mathcal{D}(n)=\{n+1,n+2, \dots,|\mathcal{R}|\}$.
\end{lemm}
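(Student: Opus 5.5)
Our plan is an exchange argument: starting from an arbitrary optimal mode assignment, we move it to prefix form without increasing the objective. Let $(\mathcal U^\star,\mathcal D^\star)$ be an optimal partition of $\mathcal R$, with optimal powers and optimal value $\tau^\star = \max\{\tau^{\mathrm{ul}}(\mathcal U^\star),\tau^{\mathrm{dl}}(\mathcal D^\star)\}$. By the power-allocation discussion preceding Lemma~\ref{lemm:opt_dl_power_allocation}, the best \ac{UL} delay for a set $\mathcal U$ is $\max_{k\in\mathcal U}\tau_k^{\mathrm{ul}}$, attained at maximum power. By Lemma~\ref{lemm:opt_dl_power_allocation}, the best \ac{DL} delay for a set $\mathcal D$ is $\tau^{\mathrm{dl}}(\mathcal D)$. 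We use the conventions that an empty set contributes delay $0$.

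\textbf{Step 1 (monotonicity of the DL delay).} We first show that $\tau^{\mathrm{dl}}(\cdot)$ is nondecreasing under set inclusion: if $\mathcal D'\subseteq\mathcal D$, then $\tau^{\mathrm{dl}}(\mathcal D')\le\tau^{\mathrm{dl}}(\mathcal D)$. Each summand $\frac{N_{0,k}^{\mathrm{dev}}B_k}{|h_k|^2}\bigl(2^{c/(B_k\tau)}-1\bigr)$, with $c=V^{\rm eff}(\lceil\log_2 V\rceil+Q)$, is nonnegative and strictly decreasing in $\tau$. Removing users therefore lowers the left-hand side of the defining equation at every $\tau$, so the root can only move to a smaller $\tau$. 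An equivalent way to see this is that the optimal powers for $\mathcal D$, restricted to $\mathcal D'$, remain feasible for $\mathcal D'$ and achieve delay at most $\tau^{\mathrm{dl}}(\mathcal D)$.

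\textbf{Step 2 (exchange to prefix form).} Let $n=|\mathcal U^\star|$, and compare $\mathcal U(n)=\{1,\dots,n\}$ with $\mathcal U^\star$. Because the devices are sorted as in \eqref{eqn:ul_sorting}, the $n$ smallest \ac{UL} delays are those in $\mathcal U(n)$. Hence $\tau_n^{\mathrm{ul}}\le\max_{k\in\mathcal U^\star}\tau_k^{\mathrm{ul}}$, so the \ac{UL} part does not get worse. The \ac{DL} part is harder, because $\mathcal D(n)$ need not be a subset of $\mathcal D^\star$ when equal sizes are swapped. We therefore avoid equal-size swapping and choose the prefix differently. Set $\bar n=\max\{k: \tau_k^{\mathrm{ul}}\le \tau^{\mathrm{ul}}(\mathcal U^\star)\}$, where ties are resolved by putting every device with \ac{UL} delay at most the optimal \ac{UL} delay into the prefix. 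Then $\mathcal U^\star\subseteq\mathcal U(\bar n)$ and $\tau^{\mathrm{ul}}(\bar n)\le\tau^{\mathrm{ul}}(\mathcal U^\star)$, since every element of $\mathcal U(\bar n)$ has \ac{UL} delay at most that value. It follows that $\mathcal D(\bar n)\subseteq\mathcal D^\star$, and Step~1 gives $\tau^{\mathrm{dl}}(\bar n)\le\tau^{\mathrm{dl}}(\mathcal D^\star)$. Thus $\tau(\bar n)\le\tau^\star$, so the prefix partition is also optimal. The case $\mathcal U^\star=\emptyset$ corresponds directly to $n=0$.

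\textbf{Expected obstacle.} The main subtlety is the handling of ties and empty sets. This includes checking that enlarging the \ac{UL} set to $\mathcal U(\bar n)$ keeps the \ac{UL} maximum unchanged when several devices share the same $\tau_k^{\mathrm{ul}}$. It also includes the case where $\mathcal D(\bar n)$ becomes empty, for which we need the convention $\tau^{\mathrm{dl}}(|\mathcal R|)=0$ together with Step~1 in its degenerate form. Once these are checked, the prefix structure follows, and the problem reduces to a search over $n$.
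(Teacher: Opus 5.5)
Your proof is correct and uses essentially the same exchange argument as the paper. Adding \ac{UL} devices whose minimum \ac{UL} delay does not exceed the current \ac{UL} maximum leaves the \ac{UL} delay unchanged, while shrinking the \ac{DL} set can only lower the optimized \ac{DL} delay; the differences are that you enlarge to your $\mathcal U(\bar n)$ in one step rather than one device at a time, and you verify the \ac{DL} monotonicity explicitly from the defining equation in Lemma~\ref{lemm:opt_dl_power_allocation} where the paper simply asserts it.
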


\begin{IEEEproof}
	Please refer to Appendix~\ref{proof:prefix_mode_selection}.
\end{IEEEproof}

\begin{figure}[t]
	\centering
	\includegraphics[width=.8\columnwidth]{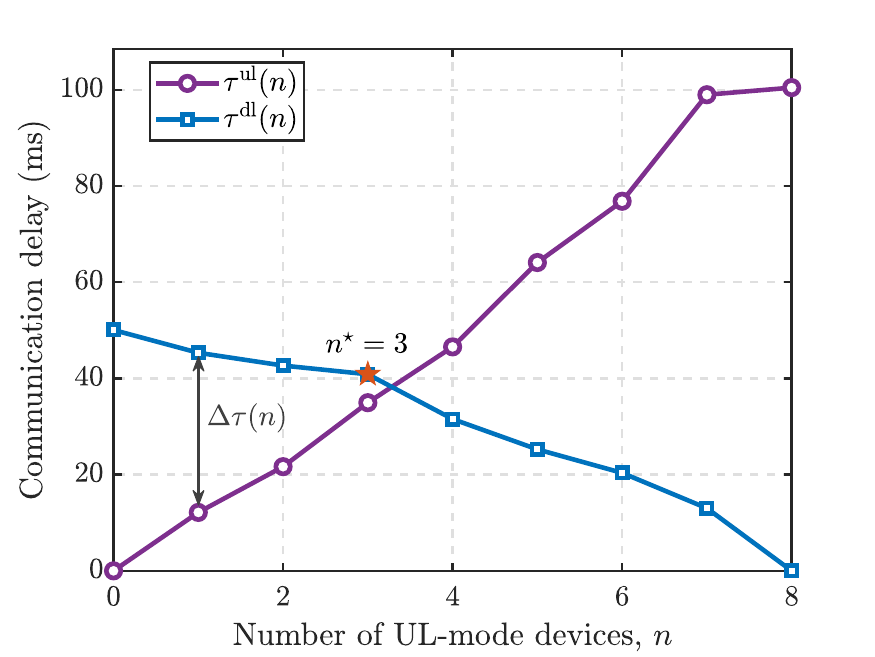}
     \vspace{-.3em}
	\caption{Illustration of the monotonic structure of \ac{UL} and \ac{DL} delays for a \ac{Multi-SPIN} round with $\left|\mathcal{R}\right| = 8$ devices participating in distribution transmission.
	The optimal \ac{UL}/\ac{DL} split point is determined by the intersection of the two delay curves, where the candidate solutions $n=3$ and $n=4$ are compared, yielding the optimal solution $n^\star=3$.}
	\label{fig:ul_dl_delays}
	\vspace{-.2em}
\end{figure} 
By Lemma~\ref{lemm:prefix_mode_selection}, the original combinatorial mode-selection problem can be reduced to a one-dimensional search over the \ac{UL}/\ac{DL} split point $n$.
This scalar search can be further reduced to a bisection search by exploiting the opposite monotonicity of the \ac{DL} and \ac{UL} delays.
To characterize this structure, define the \ac{DL}--\ac{UL} delay difference as
\begin{align}
    \Delta \tau(n) \triangleq \tau^{\mathrm{dl}}(n) - \tau^{\mathrm{ul}}(n).
\end{align}
As $n$ increases, more devices are assigned to the \ac{UL} mode.
Hence, $\tau^{\rm ul}(n)$ is non-decreasing in $n$, while $\tau^{\rm dl}(n)$ is non-increasing in $n$, because the \ac{DL} set becomes smaller and the same total \ac{DL} power budget is shared by fewer devices. Consequently, $\Delta \tau(n)$ is non-increasing in $n$.

\begin{prop}[Optimal Mode Selection via Bisection Search]
\label{prop:bisection_mode_selection}
    Let $\bar{n} \triangleq \min \, \{n:\Delta\tau(n)\leq 0\}$ be the first integer at which the delay difference becomes non-positive.
    Then the optimal \ac{UL}/\ac{DL} split point is given by
    \begin{align}\label{eqn:optimal_m}
    n^\star = 
    \arg\min_{n\in\{\bar n-1,\bar n\}\cap\{0,1,\dots,|\mathcal R|\}}
    ~\tau(n).
    \end{align}
    Moreover, $\bar n$ can be efficiently obtained via a bisection search over
$n\in\{0,1,\ldots,|\mathcal R|\}$.
\end{prop}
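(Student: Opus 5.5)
By Lemma~\ref{lemm:prefix_mode_selection}, minimizing the reactive communication latency over all mode assignments is equivalent to minimizing $\tau(n)=\max\{\tau^{\mathrm{ul}}(n),\tau^{\mathrm{dl}}(n)\}$ over $n\in\{0,1,\dots,|\mathcal R|\}$. The plan is to use the opposite monotonicity of the two curves to show that this minimum is attained at one of the two indices adjacent to the crossing point $\bar n$.

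\textbf{Step 1: monotonicity.} I would first make the monotonicity claims rigorous.
\begin{itemize}
\item $\tau^{\mathrm{ul}}(n)=\tau^{\mathrm{ul}}_n$ is non-decreasing by the sorting in~\eqref{eqn:ul_sorting}, with $\tau^{\mathrm{ul}}(0)=0$.
\item For $\tau^{\mathrm{dl}}(n)$, write the left-hand side of the equation in Lemma~\ref{lemm:opt_dl_power_allocation} as $f_{\mathcal D}(\tau)$. It is strictly decreasing in $\tau$. Since $\mathcal D(n+1)\subset\mathcal D(n)$ and every summand is positive, $f_{\mathcal D(n+1)}(\tau)\le f_{\mathcal D(n)}(\tau)$ for all $\tau$.
\item Hence $f_{\mathcal D(n+1)}(\tau^{\mathrm{dl}}(n))\le P^{\mathrm{dl}}$, which forces $\tau^{\mathrm{dl}}(n+1)\le\tau^{\mathrm{dl}}(n)$.
\item The endpoint convention $\tau^{\mathrm{dl}}(|\mathcal R|)=0$ keeps this monotone.
\end{itemize}
Consequently $\Delta\tau$ is non-increasing, and $\Delta\tau(|\mathcal R|)=-\tau^{\mathrm{ul}}_{|\mathcal R|}\le 0$. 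So $\bar n$ is well defined and lies in $\{0,\dots,|\mathcal R|\}$.

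\textbf{Step 2: reduction to the two neighbours of $\bar n$.} Split into two ranges.
\begin{itemize}
\item For $n\ge\bar n$: monotonicity of $\Delta\tau$ gives $\Delta\tau(n)\le 0$, so $\tau(n)=\tau^{\mathrm{ul}}(n)\ge\tau^{\mathrm{ul}}(\bar n)=\tau(\bar n)$.
\item For $n\le\bar n-1$: minimality of $\bar n$ gives $\Delta\tau(n)>0$, so $\tau(n)=\tau^{\mathrm{dl}}(n)\ge\tau^{\mathrm{dl}}(\bar n-1)=\tau(\bar n-1)$.
\end{itemize}
Every $n$ is therefore dominated by $\bar n$ or by $\bar n-1$. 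When $\bar n=0$, the second range is empty and the intersection with $\{0,\dots,|\mathcal R|\}$ in~\eqref{eqn:optimal_m} handles this case. Combining with Lemma~\ref{lemm:prefix_mode_selection} yields global optimality of $n^\star$, with powers given by maximum-power UL and Lemma~\ref{lemm:opt_dl_power_allocation} for DL.

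\textbf{Step 3: bisection.} Because $\Delta\tau(n)\le 0$ is a monotone predicate in $n$ (false then true), $\bar n$ can be located by bisection on $\{0,\dots,|\mathcal R|\}$. This takes $O(\log|\mathcal R|)$ evaluations, each of which requires one inner bisection for $\tau^{\mathrm{dl}}(n)$.

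The main obstacle I expect is the careful treatment of edge conventions in the monotonicity of $\tau^{\mathrm{dl}}$, namely the step from $n=|\mathcal R|-1$ to $n=|\mathcal R|$ and the case of an empty UL set. The subset-comparison argument in Step 1 is the core. I would verify it directly via the strictly decreasing function $f_{\mathcal D}$ rather than rely on intuition.
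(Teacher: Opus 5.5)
Your proposal is correct and follows essentially the same route as the paper's proof. Like the paper, you show that $\tau^{\mathrm{ul}}(n)$ is non-decreasing and $\tau^{\mathrm{dl}}(n)$ is non-increasing, then use $\tau(n)=\tau^{\mathrm{dl}}(n)$ for $n<\bar n$ and $\tau(n)=\tau^{\mathrm{ul}}(n)$ for $n\ge\bar n$ to place the minimum at $\bar n-1$ or $\bar n$, with bisection justified by the monotone sign change of $\Delta\tau$. Your extra care fills in points the paper states only informally: you derive $\tau^{\mathrm{dl}}(n+1)\le\tau^{\mathrm{dl}}(n)$ from the strictly decreasing power-sum function of Lemma~\ref{lemm:opt_dl_power_allocation}, and you use $\Delta\tau(|\mathcal R|)\le 0$ to show that $\bar n$ is well defined (also, since $\Delta\tau(0)=\tau^{\mathrm{dl}}(0)>0$, the case $\bar n=0$ you guard against never actually arises).
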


\begin{IEEEproof}
	Please refer to Appendix~\ref{proof:bisection_mode_selection}.
\end{IEEEproof}

Fig.~\ref{fig:ul_dl_delays} shows the \ac{UL} and \ac{DL} delays as functions of the number of \ac{UL}-mode devices.
The optimal mode split is determined by comparing the candidate values around the intersection point, as formalized in Proposition~\ref{prop:bisection_mode_selection}.

\begin{remk}[Coupling Between Optimal Mode Selection and Power Allocation]
    The bisection search in Proposition~\ref{prop:bisection_mode_selection} is performed over optimized delay values, rather than over fixed-power delay values.
    Specifically, for each queried value of $n$, the \ac{UL} delay $\tau^{\rm ul}(n)$ is obtained from maximum-power \ac{UL} transmission, while the \ac{DL} delay $\tau^{\rm dl}(n)$ is obtained by solving the \ac{DL} power-allocation problem in Lemma~\ref{lemm:opt_dl_power_allocation} over the current \ac{DL} set $\mathcal D(n)$.
    Thus, each evaluation of $\Delta\tau(n)$ implicitly invokes the optimal power-allocation solution.
    After the optimal $n^\star$ is determined, the corresponding optimal mode selection is given by the prefix structure $\mathcal U(n^\star)$ and $\mathcal D(n^\star)$, and the associated \ac{DL} powers are exactly those returned by Lemma~\ref{lemm:opt_dl_power_allocation}.
    Therefore, the proposed method jointly obtains the optimal mode selection and power allocation, rather than first fixing the modes and then optimizing the powers.
\end{remk}

\section{Proactive Draft-Length Control}\label{sec:proactive_draft_length}
In this section, we develop a greedy-search-based algorithm for draft-length control.
The main idea is to start from an initial draft-length vector and progressively increase the draft lengths until no candidate update can further improve the expected sum token goodput.
However, due to the stochastic nature of the objective and its dependence on the minimum expected communication latency, the resulting optimization problem is highly non-convex and cannot be solved in closed form.
Moreover, the evaluation of each candidate draft-length configuration requires solving the underlying reactive communication problem, which itself does not admit a closed-form solution.

\subsection{Algorithm Overview}\label{subsec:draft_length_alg_overview}
To address the above challenges, we design a greedy optimization procedure for draft-length control. 
The algorithm first performs synchronized all-ones updates to rapidly exploit global improvements in draft lengths across all devices.
After convergence of the first phase, a coordinate-wise greedy refinement is applied to capture residual gains by selectively increasing individual draft lengths.
The two phases are detailed as follows.
\subsubsection{Synchronized All-Ones Updates}
The algorithm first performs synchronized all-ones updates.
Let $\mathbf{1} \in \mathbb{R}^K$ denote the all-ones vector of length $K$.
Starting from the current draft-length vector $\boldsymbol{\gamma}$, the algorithm evaluates the candidate update
\begin{equation}
	\boldsymbol{\gamma}'
	=
	\boldsymbol{\gamma}
	+
	\mathbf{1}.
\end{equation}
Let $\chi^\star$ be the optimal reactive policy derived in Section~\ref{sec:reactive_mode_power}.
If $G(\boldsymbol{\gamma}', \chi^\star)
>
G(\boldsymbol{\gamma}, \chi^\star)$,
the update $\boldsymbol{\gamma}\leftarrow\boldsymbol{\gamma}'$ is accepted.
This procedure is repeated until no synchronized all-ones update can further improve the expected goodput.

\subsubsection{Coordinate Greedy Refinement}
After the synchronized phase terminates, the algorithm proceeds with coordinate greedy refinement.
Let $\mathbf e_k \in \mathbb{R}^K$ denote the $k$-th unit vector.
At each iteration, the algorithm evaluates the candidate updates
\begin{equation}
	\boldsymbol{\gamma}'_k
	=
	\boldsymbol{\gamma}
	+
	\mathbf e_k,
	\quad k\in\mathcal K.
\end{equation}
The device yielding the largest expected goodput is selected as
\begin{equation}
	k^\star
	=
	\arg\max_{k\in\mathcal K}~~
	G(\boldsymbol{\gamma}'_k, \chi^\star).
\end{equation}
If $G(\boldsymbol{\gamma}'_{k^\star}, \chi^\star)
>
G(\boldsymbol{\gamma},\chi^\star)$,
the update
$\boldsymbol{\gamma}\leftarrow\boldsymbol{\gamma}'_{k^\star}$ is accepted; otherwise, the coordinate greedy phase terminates.
The complete procedure is summarized in Algorithm~\ref{alg:greedy_draft_length}.
As detailed in Section~\ref{sec:expected_goodput_evaluation}, every evaluation of \(G(\boldsymbol{\gamma},\chi^\star)\) in Algorithm~\ref{alg:greedy_draft_length} is implemented using the \ac{SAA} estimate.

\begin{algorithm}[t]
	\caption{Greedy Search for Draft-Length Control}
	\label{alg:greedy_draft_length}
	\begin{algorithmic}[1]
		\STATE {\bfseries Input:} Optimal reactive policy $\chi^\star$
		\STATE {\bfseries Initialize:} $\bsm{\gamma} = \mbf{0}$
		\STATE \% \textit{Synchronized all-ones updates}
		\WHILE{$G(\bsm{\gamma} + \mathbf{1}, \chi^\star) > G(\bsm{\gamma}, \chi^\star)$}
		\STATE $\bsm{\gamma} \leftarrow \bsm{\gamma} + \mathbf{1}$
		\ENDWHILE
		
		\STATE \% \textit{Coordinate greedy refinement}
		\WHILE{$\exists k \in \mathcal{K}$ such that 
			$G(\boldsymbol{\gamma} + \mathbf{e}_k, \chi^\star) \!> G(\boldsymbol{\gamma}, \chi^\star)$}
		\STATE $k^\star = \arg\max_{k \in \mathcal{K}} G(\boldsymbol{\gamma} + \mathbf{e}_k, \chi^\star)$
		\STATE $\boldsymbol{\gamma} \leftarrow \boldsymbol{\gamma} + \mathbf{e}_{k^\star}$
		\ENDWHILE
		 \STATE {\bfseries Output:} Optimized draft lengths $\bsm{\gamma}$
	\end{algorithmic}
\end{algorithm}

\subsection{Expected Goodput Evaluation}\label{sec:expected_goodput_evaluation}
The remaining challenge is the evaluation of the expected goodput, which requires computing the expected minimum communication latency. 
Formally, we have
\begin{align}
	\mathbb{E}_{\mbf{o}|\bsm{\gamma}} \! \left[T^\mathrm{comm}_{\min}\left(\mbf{o}\right)\right] = \sum_{\mbf{o}\in\{0,1\}^K}\mathrm{Pr} (\mbf{o}\mid\bsm{\gamma}) T^\mathrm{comm}_{\min}\left(\mbf{o}\right),
\end{align}
where $\Pr(\mathbf{o}\mid\boldsymbol{\gamma}) = \prod_{k\in\mathcal K} \left(\alpha_k^{\gamma_k}\right)^{1-o_k} \left(1-\alpha_k^{\gamma_k}\right)^{o_k}$.
The above expression involves an exponential number of outcomes, since $\mbf{o} \in \{0,1\}^K$ yields $2^K$ possible realizations.
Moreover, the term $T^\mathrm{comm}_{\min}(\mbf{o})$ does not admit a closed-form expression due to its dependence on the underlying mode-selection and power-allocation decisions.

To overcome these difficulties, we adopt a \ac{SAA} approach.
Specifically, we draw $N_{\mathrm{SAA}}$ independent samples $\{\mbf{o}^{(\ell)}\}_{\ell=1}^{N_{\mathrm{SAA}}}$ according to $\mathrm{Pr}(\mbf{o}\mid\bsm{\gamma})$, and approximate the expectation as
\begin{align}\label{eqn:sample_avg}
	\mathbb{E}_{\mbf{o}|\bsm{\gamma}} \! \left[T^\mathrm{comm}_{\min}\left(\mbf{o}\right)\right]
	\approx
	\frac{1}{N_{\mathrm{SAA}}}
	\sum_{\ell=1}^{N_{\mathrm{SAA}}}
	T_{\min}^{\mathrm{comm}}
	\big(\mbf{o}^{(\ell)}\big).
\end{align}
By the law of large numbers, the sample average in~\eqref{eqn:sample_avg} converges to the expected minimum communication latency as $N_{\mathrm{SAA}}$ increases.
The \ac{SAA} approach therefore provides a consistent estimate while reducing the computational cost from enumerating all $2^K$ possible verification outcomes to evaluating only $N_{\mathrm{SAA}}$ sampled outcomes.
The resulting \ac{SAA} estimate is used to evaluate the goodput objective throughout Algorithm~\ref{alg:greedy_draft_length}.

\section{Experimental Results}\label{sec:experiments}
\subsection{Experimental Settings}
\subsubsection{Models and Tasks}
We consider two draft--target model pairs: Qwen2.5-0.5B with Qwen2.5-14B~\cite{qwen2024qwen25_14b_config}, and DeepSeek-R1-Distill-Qwen-1.5B with DeepSeek-R1-Distill-Qwen-32B~\cite{guo2025deepseek}.
For both model pairs, the draft and target models share the same tokenizer, with $V=151{,}665$ valid token IDs.\footnote{Although their model configurations report output dimensions of either 151{,}936 or 152{,}064, the additional dimensions correspond to padding introduced for distributed training rather than distinct tokenizer entries~\cite{qwen2024qwen25_14b_config, guo2025deepseek}.}
Each device is independently assigned one of four representative generation tasks uniformly drawn from HumanEval~\cite{dataset_human_eval}, GSM8K~\cite{dataset_gsm8k}, MT-Bench~\cite{dataset_mt_bench}, and IFEval~\cite{dataset_ifeval}, covering code generation, mathematical reasoning, multi-turn conversation, and instruction following, respectively.
For each model pair and task, we estimate the empirical token acceptance rate by running \ac{SPIN} on a set of sampled prompts and averaging the token-wise acceptance probabilities over all decoding rounds and prompts.
The resulting estimates are summarized in Table~\ref{table:acc_rate}.

\begin{table}[t]
	\caption{Empirical Acceptance Rate for Different Datasets and Model Pairs}
	\vspace{-.4em}
	\label{table:acc_rate}
	\centering
	\begin{tabular}{ccc}
		\toprule
		& Qwen2.5 pair  & DeepSeek-R1 pair \\
		\midrule
		HumanEval & 0.9453 & 0.8691   \\
		GSM8K & 0.8580 & 0.8618   \\
		MT-Bench & 0.8509 & 0.8221   \\
		IFEval & 0.8816 & 0.8323  \\
		\bottomrule
	\end{tabular}
	\vspace{-.4em}
\end{table}

\subsubsection{Computation Settings}
The edge server is equipped with an NVIDIA A100 GPU and executes the target model for batched draft verification. 
For each target model, we measure the verification latency at different batch sizes and fit the model in~\eqref{eqn:verification_latency}.
The fitted parameters are $T_0^\mathrm{ver}=42.8$ ms and $T_1^\mathrm{ver}=7.6$ ms for Qwen2.5-14B, and $T_0^\mathrm{ver}=69.3$ ms and $T_1^\mathrm{ver}=9.1$ ms for DeepSeek-R1-Distill-Qwen-32B.
The reference per-token drafting latencies of Qwen2.5-0.5B and DeepSeek-R1-Distill-Qwen-1.5B, measured on the same GPU, are 15.2 ms and 19.6 ms, respectively.
To capture heterogeneity in device-side computing capabilities, we independently scale the drafting latency of each device by a factor uniformly drawn from $[1,1.15]$, resulting in device-specific per-token drafting latencies $T_k^\mathrm{dft}$.
For the \ac{SAA}-based draft-length optimization, we use $N_{\mathrm{SAA}}=1000$ verification-outcome samples for each goodput evaluation.

\subsubsection{Communication Settings}
We consider an \ac{OFDMA} system with a total of $K$ devices sharing a bandwidth of $B=10$ MHz, where $K=20$ unless otherwise specified.
In each \ac{Multi-SPIN} round, the available bandwidth is equally allocated among the devices with rejected drafts.
We set the effective vocabulary size to $V^\mathrm{eff}=1024$ and quantize each probability value using $Q=16$ bits.
The wireless channels experience independent Rayleigh small-scale fading.
The large-scale channel gain is modeled using a reference gain of $-30$ dB at a distance of 1 m and a path-loss exponent of 3.5.
The device-to-server distances are independently drawn from the interval $[100,500]$ m.
All devices are assigned the same \ac{UL} power budget, denoted by $P^{\rm ul} \triangleq P_k^{\rm ul}$ for all $k \in \mathcal{K}$, whose value is specified in the corresponding figure.
The noise power spectral densities at the server and devices are set to $N_0^\mathrm{ser}=-169$ dBm/Hz and $N_0^\mathrm{dev}=-167$ dBm/Hz, respectively.

\subsubsection{Benchmarks}
We compare the proposed framework with several benchmark schemes to separately evaluate the performance gains from mode selection and draft-length control.
To assess the effectiveness of adaptive mode selection, we consider the following benchmarks:
\begin{itemize}
    \item \textbf{All \ac{UL} mode:}
    Devices with rejected drafts all operate in the \ac{UL} mode for draft-distribution uploading.
    Each device transmits at its maximum \ac{UL} power.
    \item \textbf{All \ac{DL} mode:}
    Devices with rejected drafts all operate in the \ac{DL} mode for target-distribution downloading.
    The \ac{DL} transmit power is optimally allocated according to Lemma~\ref{lemm:opt_dl_power_allocation}.
    \item \textbf{Random mode selection:}
    Devices with rejected drafts independently select the \ac{UL} or \ac{DL} mode with equal probability.
    Given the resulting mode assignment, the \ac{UL}-mode devices transmit at their maximum powers, while the \ac{DL} power is optimally allocated among the \ac{DL}-mode devices.
\end{itemize}
For each mode-selection scheme, the draft lengths are separately optimized under the corresponding communication policy.
This ensures that the comparison isolates the performance gain of adaptive mode selection rather than that of draft-length control.

For draft-length control, we consider the following benchmarks:
\begin{itemize}
    \item \textbf{Optimal uniform draft length:}
    All devices use the same draft length, i.e., $\gamma_k=\gamma$ for all $k\in\mathcal K$.
    The uniform draft length $\gamma$ is selected by exhaustive search over $\{0, 1, \dots, \gamma_{\max}\}$ to maximize the expected sum token goodput, where we set $\gamma_{\max}= 20$.
    
    \item \textbf{Computation-only draft-length control:}
    The draft lengths are selected by considering only the computation part of the round latency, while ignoring the communication delay.

\end{itemize}
For both draft-length schemes, the reactive communication stage employs the optimal mode-selection and power-allocation policy developed in Section~\ref{sec:reactive_mode_power}.
Therefore, their performance differences arise solely from the draft-length control strategy.

\begin{figure}[t]
	\centering
	\begin{minipage}{0.4939\linewidth}
		\centering
		\includegraphics[width=1\linewidth]{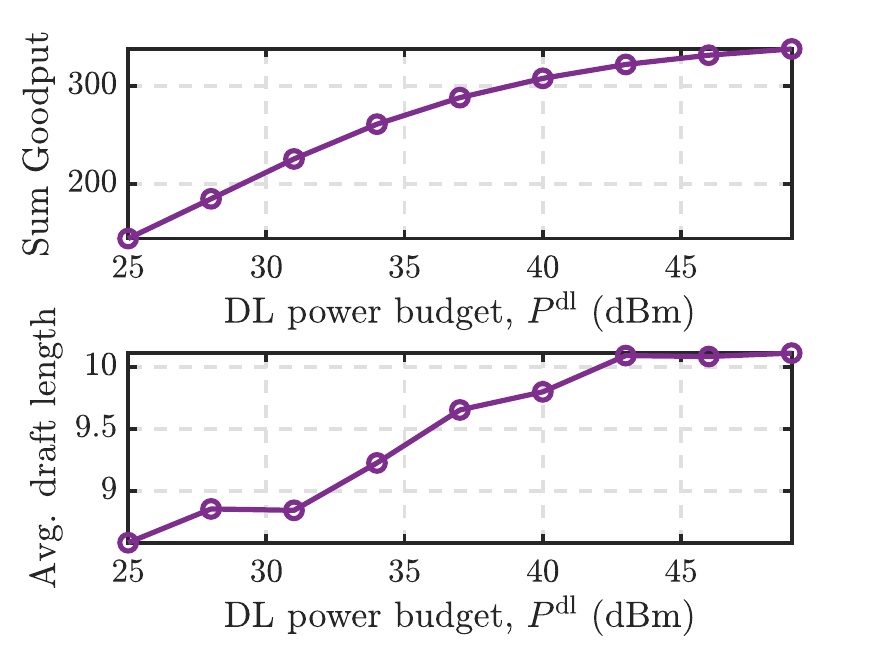}\\
		\footnotesize{(a) Draft-length behavior}
	\end{minipage}
	\begin{minipage}{0.4939\linewidth}
		\centering		
		\includegraphics[width=1\linewidth]{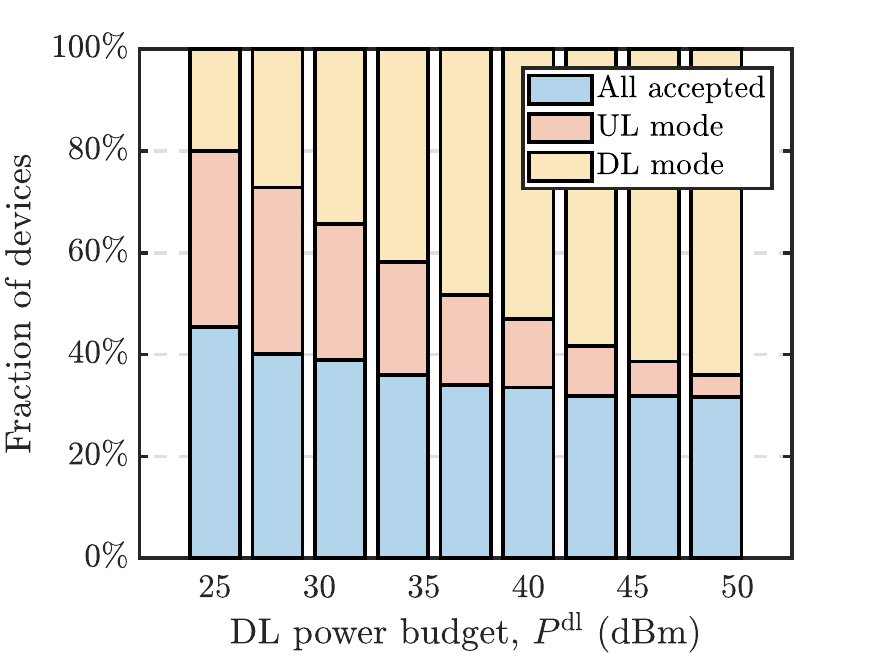}\\
		\footnotesize{(b) Mode-selection behavior}
	\end{minipage}
	\caption{\label{fig:solution_qwen}
		Adaptive draft-length control and mode selection versus the \ac{DL} power budget for Qwen2.5 model pair, with the per-device \ac{UL} power budget fixed at $P^\mathrm{ul} = 10$ dBm.}
\end{figure}

\begin{figure}[t]
	\centering
	\begin{minipage}{0.4939\linewidth}
		\centering
		\includegraphics[width=1\linewidth]{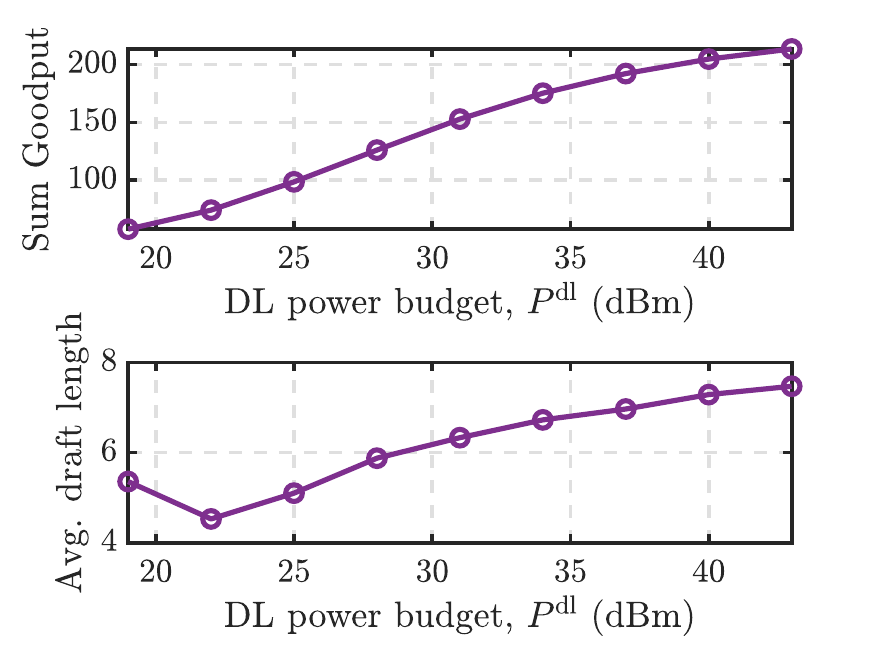}\\
		\footnotesize{(a) Draft-length behavior}
	\end{minipage}
	\begin{minipage}{0.4939\linewidth}
		\centering		
		\includegraphics[width=1\linewidth]{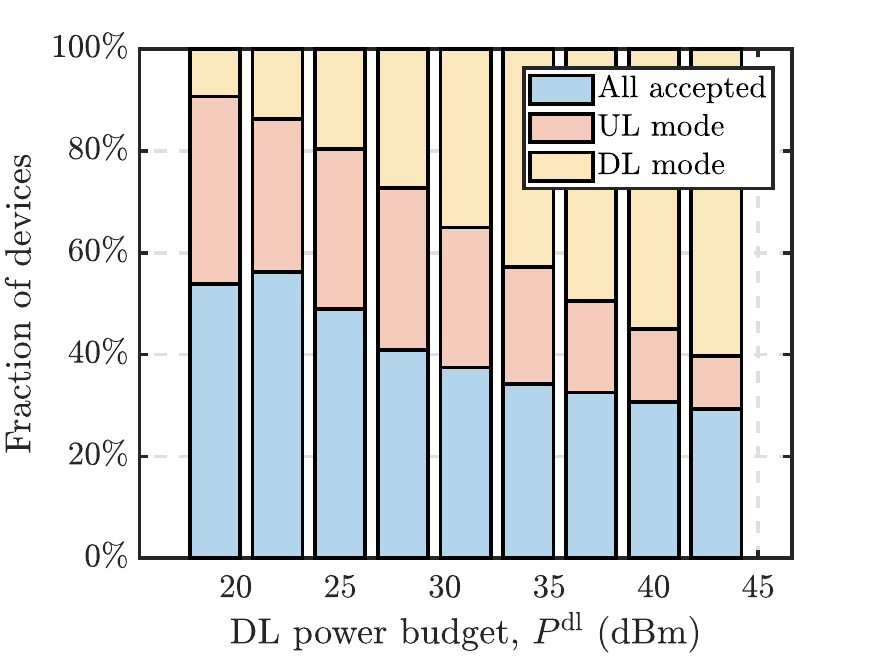}\\
		\footnotesize{(b) Mode-selection behavior}
	\end{minipage}
	\caption{\label{fig:solution_structure_ds}
		Adaptive draft-length control and mode selection versus the \ac{DL} power budget for DeepSeek-R1 model pair, with the per-device \ac{UL} power budget fixed at $P^\mathrm{ul} = 10$ dBm.}
\end{figure}

\subsection{Adaptive Draft-Length and Mode-Selection Behaviors}
Figs.~\ref{fig:solution_qwen} and~\ref{fig:solution_structure_ds} illustrate how the proactive draft-length control and the reactive mode selection adapt to the \ac{DL} power budget $P^{\rm dl}$.
For both model pairs, the sum token goodput increases steadily with $P^{\rm dl}$.
As $P^{\rm dl}$ becomes sufficiently large, communication delay becomes less dominant.
The performance bottleneck then shifts toward local drafting and server-side verification, causing the goodput gain to gradually saturate.
The same bottleneck shift also explains the draft-length behavior.
The average optimized draft length generally increases with the \ac{DL} power budget.
A longer draft provides greater speculative gain by exposing more tokens to parallel verification, but also increases the probability of rejection and hence the likelihood of costly distribution transmission.
As $P^{\rm dl}$ increases, the communication penalty associated with rejection is reduced.
The draft-length controller can therefore tolerate more frequent rejections and select longer drafts to better exploit the parallel-verification gain.

\begin{figure}[t]
	\centering
	\begin{minipage}{0.4939\linewidth}
		\centering
		\includegraphics[width=1\linewidth]{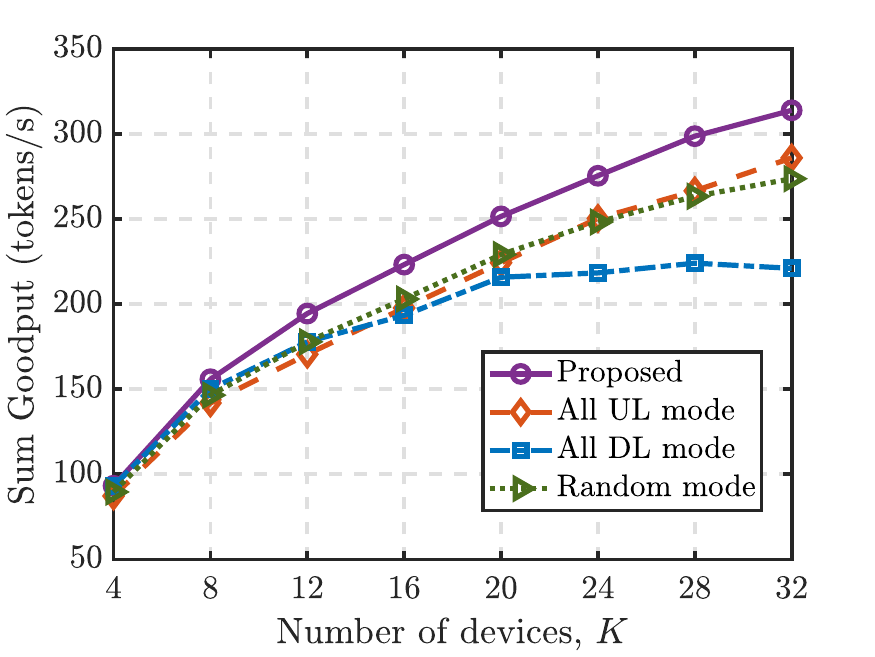}\\
		\footnotesize{(a) $P^{\rm dl}/P^{\rm ul}=4$}
	\end{minipage}
	\begin{minipage}{0.4939\linewidth}
		\centering		
		\includegraphics[width=1\linewidth]{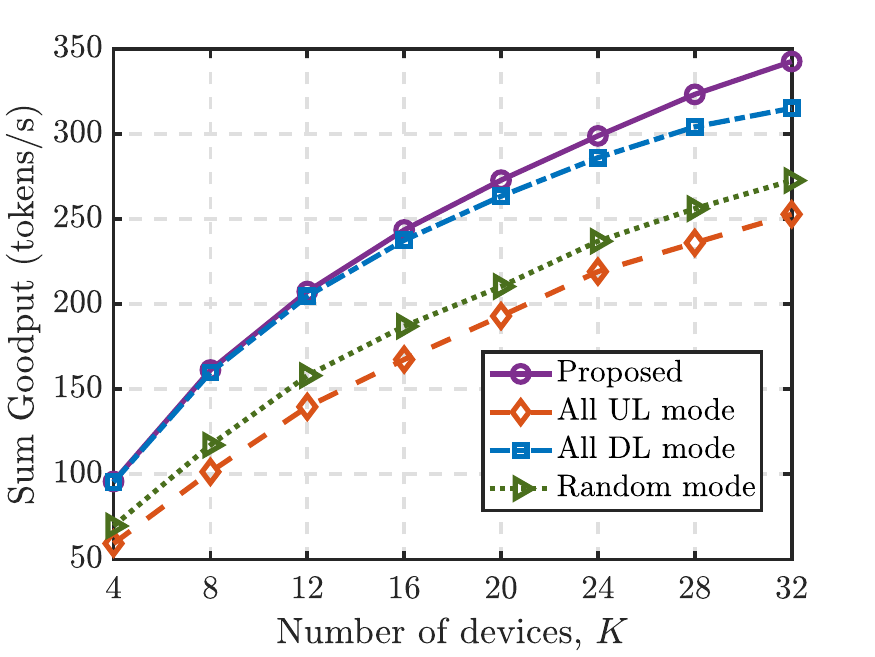}\\
		\footnotesize{(b) $P^{\rm dl}/P^{\rm ul}=16$}
	\end{minipage}
	\caption{\label{fig:number_devices_qwen}
		Sum goodput versus the number of devices for Qwen2.5 model pair under different \ac{DL}-to-\ac{UL} power budget ratios.}
\end{figure}

\begin{figure}[t]
	\centering
	\begin{minipage}{0.4939\linewidth}
		\centering
		\includegraphics[width=1\linewidth]{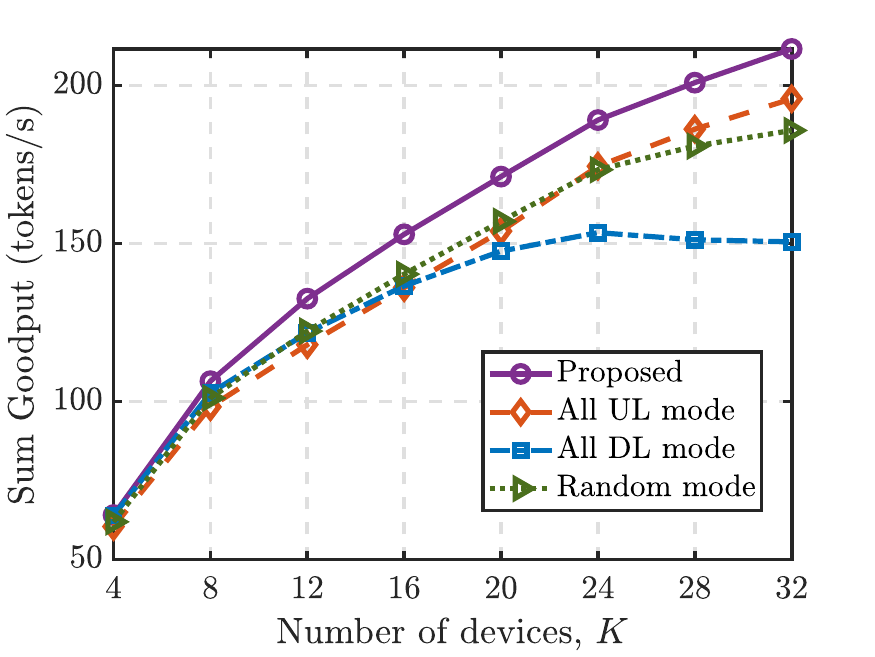}\\
		\footnotesize{(a) $P^{\rm dl}/P^{\rm ul}=4$}
	\end{minipage}
	\begin{minipage}{0.4939\linewidth}
		\centering		
		\includegraphics[width=1\linewidth]{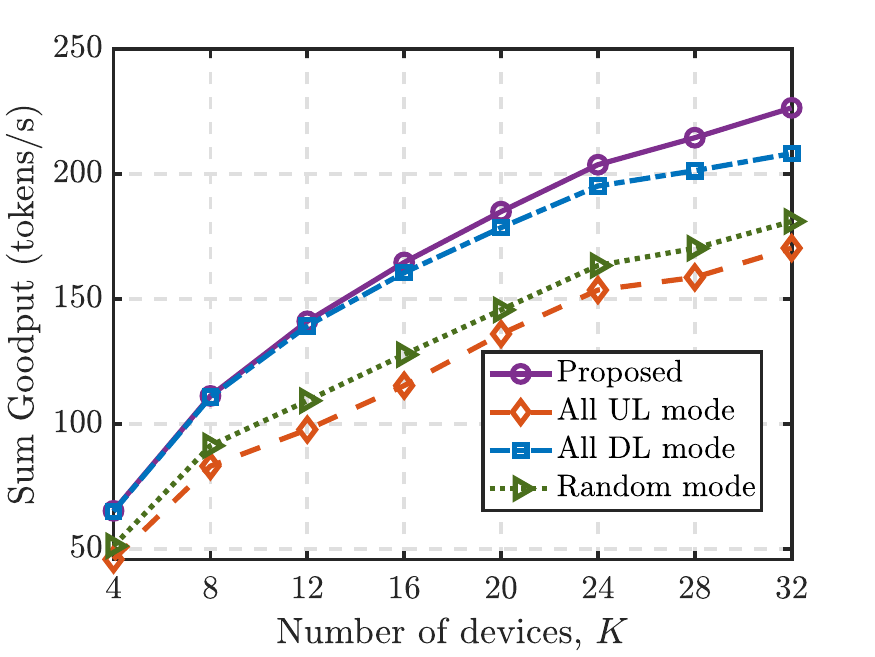}\\
		\footnotesize{(b) $P^{\rm dl}/P^{\rm ul}=16$}
	\end{minipage}
	\caption{\label{fig:number_devices_ds}
		Sum goodput versus the number of devices for DeepSeek-R1 model pair under different \ac{DL}-to-\ac{UL} power budget ratios.}
\end{figure}

Figs.~\ref{fig:solution_qwen} and~\ref{fig:solution_structure_ds} further show the fraction of devices in the three possible states: all drafted tokens accepted, correction through the \ac{UL} mode, and correction through the \ac{DL} mode.
When $P^{\rm dl}$ is small, the shared \ac{DL} resource is restrictive, and most devices requiring correction are therefore assigned to the \ac{UL} mode.
As $P^{\rm dl}$ increases, the optimized \ac{DL} transmission delay decreases, and the proposed scheme progressively shifts rejected devices from the \ac{UL} mode to the \ac{DL} mode.
Nevertheless, a fraction of devices continues to use the \ac{UL} even at relatively large $P^{\rm dl}$, since assigning every rejected device to the \ac{DL} would introduce unnecessary competition for the shared \ac{DL} power budget.

\subsection{Performance Gain of Optimal Mode Selection}

\begin{figure}[t]
	\centering
	\begin{minipage}{0.4939\linewidth}
		\centering
		\includegraphics[width=1\linewidth]{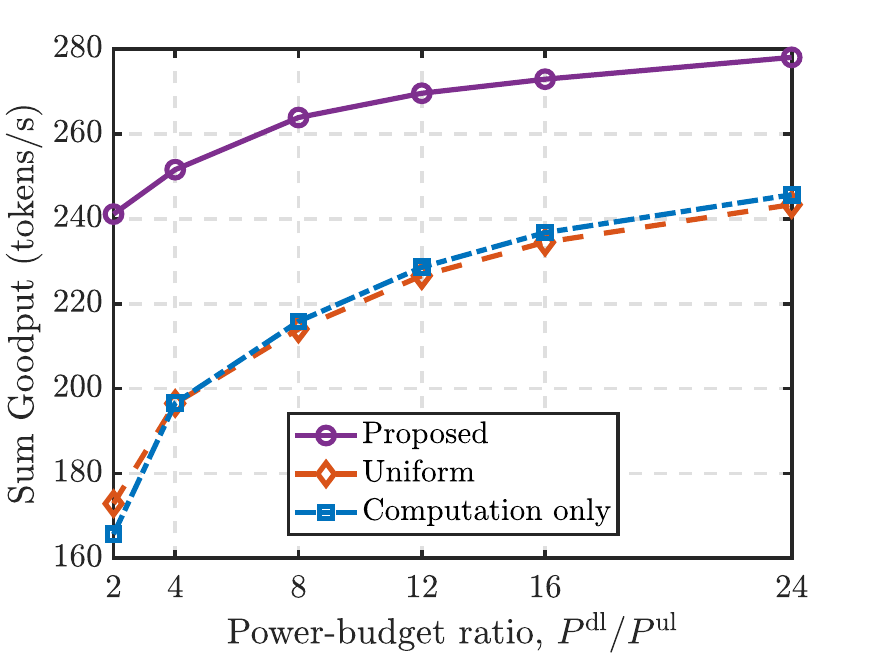}\\
		\footnotesize{(a) Qwen2.5 pair}
	\end{minipage}
	\begin{minipage}{0.4939\linewidth}
		\centering		
		\includegraphics[width=1\linewidth]{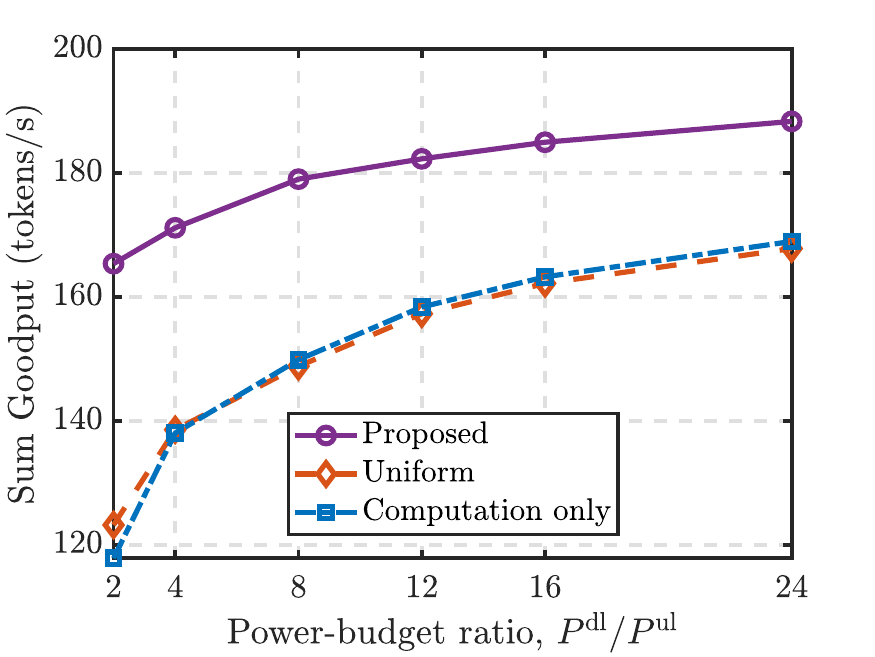}\\
		\footnotesize{(b) DeepSeek-R1 pair}
	\end{minipage}
	\caption{\label{fig:draft_lengths}
		Sum goodput versus the \ac{DL}-to-\ac{UL} power-budget ratio for Qwen2.5 and DeepSeek-R1 model pairs.}
\end{figure}

Figs.~\ref{fig:number_devices_qwen} and~\ref{fig:number_devices_ds} compare the goodput of different mode-selection schemes as the number of devices increases.
For each $K$, the total communication power budget is fixed as $P^{\rm tot}(K) = \bar{P}^{\rm dl} + K \bar{P}^{\rm ul}$, where $\bar{P}^{\rm dl} = 30$ dBm and $\bar{P}^{\rm ul} = 15$ dBm are the reference \ac{DL} and per-device \ac{UL} power budgets, respectively.
The actual budgets $P^{\rm dl}$ and $P^{\rm ul}$ are selected to satisfy both the prescribed \ac{DL}-to-\ac{UL} power-budget ratio $P^{\rm dl}/P^{\rm ul}$ and the total-budget constraint $P^{\rm dl}+K P^{\rm ul}=P^{\rm tot}(K)$.
The left and right subfigures correspond to $P^{\rm dl}/P^{\rm ul} = 4$ and $P^{\rm dl}/P^{\rm ul} = 16$, respectively.
In both cases, the proposed optimal mode-selection scheme consistently achieves the highest goodput by adaptively balancing the \ac{UL} and \ac{DL} modes.
When the \ac{DL} power is relatively limited, i.e., $P^{\rm dl}/P^{\rm ul}=4$, assigning all devices to the \ac{DL} mode leads to severe competition for the shared server power, while the all-\ac{UL} mode is constrained by the per-device transmit-power budget.
The proposed scheme avoids both extremes by assigning only a suitable subset of devices to the \ac{DL} mode.
When the \ac{DL} power is more abundant, i.e., $P^{\rm dl}/P^{\rm ul}=16$, the all-\ac{DL} scheme becomes more competitive, but remains inferior to the proposed scheme, particularly as the number of devices increases and competition for the shared \ac{DL} power intensifies.
These results demonstrate that neither a fixed \ac{UL} nor a fixed \ac{DL} policy is universally optimal, and that adaptive mode selection is essential for maintaining high goodput in Multi-\ac{SPIN}.

\subsection{Performance Gain of Proactive Draft-Length Control}

Fig.~\ref{fig:draft_lengths} evaluates the benefit of the proposed proactive draft-length control under different \ac{DL}-to-\ac{UL} power-budget ratios.
The proposed scheme consistently achieves higher goodput than both the optimal uniform draft length and the computation-only draft-length control.
Compared with the uniform-draft baseline, the gain highlights the benefit of adapting draft lengths to device-specific latency heterogeneity.
Compared with the computation-only baseline, it further demonstrates the importance of accounting for communication conditions in draft-length optimization.

\section{Concluding Remarks}\label{sec:conclusions}
In this paper, we introduced communication-mode selection for \ac{Multi-SPIN}.
The underlying philosophy of \ac{Multi-SPIN} is to move beyond server-centric inference by distributing draft generation across local \acp{SLM}, thereby harnessing otherwise underutilized device-side computation for cooperative token generation.
Communication-mode selection extends this philosophy from computation to communication.
Rather than relying exclusively on the shared \ac{DL}, the system can also exploit the distributed \ac{UL} capabilities of edge devices and select the more efficient transmission direction for each correction operation.
This differs fundamentally from conventional communication systems, in which the information source, destination, and transmission direction are typically predetermined.
In \ac{Multi-SPIN}, the direction of information flow can instead be adapted to system conditions, creating an additional degree of freedom for communication design.
Wireless communication thus evolves from a passive data-delivery link into an active component of edge \ac{LLM} inference.

This perspective opens several promising research directions.
One is to move beyond binary \ac{UL}/\ac{DL} mode selection toward multi-mode designs that jointly choose what information to exchange, in which direction, and at what representation granularity, such as full, truncated, quantized, or compressed token distributions.
Another opportunity lies in jointly designing communication modes with model collaboration itself, including draft-model assignment, draft length, and early bypass of either local drafting or server verification.
These developments could ultimately lead to a unified framework in which computation placement, information representation, and transmission direction are co-designed for scalable and resource-efficient edge intelligence.

\appendices
\section{Proof of Lemma~\ref{lemm:prefix_mode_selection}} \label{proof:prefix_mode_selection}
Consider any feasible mode selection whose \ac{UL} set $\mathcal U$ is not of prefix form.
Then there exist two devices $i<j$ such that
$i\notin \mathcal{U}$ but $j\in \mathcal{U}$.
Since the devices are sorted according to their minimum
\ac{UL} delays, we have $\tau_i^{\mathrm{ul}} \leq \tau_j^{\mathrm{ul}}$.

Now construct a new mode selection by moving device $i$ from \ac{DL} to \ac{UL}, while keeping all other devices unchanged.
The resulting UL set is $\mathcal U'=\mathcal U\cup\{i\}$, and the resulting DL set is $\mathcal D'=\mathcal D\setminus\{i\}$.
Under this change, the minimum feasible \ac{UL} delay does not increase, because device $i$ has no larger \ac{UL} delay than device $j$.
Meanwhile, the minimum feasible \ac{DL} delay cannot increase, because the \ac{DL} power-allocation problem is solved over the smaller user set $\mathcal D'$ under the same total \ac{DL} power budget.
Therefore, the resulting delay is no larger than that of the original mode selection.

Repeating this operation whenever there exists a pair $i<j$ such that $i\notin\mathcal U$ but $j\in\mathcal U$ eventually yields a prefix \ac{UL} set $\mathcal U(n)=\{1,2,\ldots,n\}$ for some $n\in\{0,1,\ldots,|\mathcal R|\}$, without increasing the communication delay.
Hence, there exists an optimal mode selection with the claimed prefix structure.

\section{Proof of Proposition~\ref{prop:bisection_mode_selection}} \label{proof:bisection_mode_selection}
Since $\tau^{\rm ul}(n)$ is non-decreasing in $n$ and $\tau^{\rm dl}(n)$ is non-increasing in $n$, the delay difference $\Delta \tau(n)$ is non-increasing in $n$.
Therefore, the first index $\bar n$ satisfying $\Delta\tau(\bar n)\leq 0$ can be found via a bisection search.

It remains to show that the global minimizer of $\tau(n)$ must lie around this crossing point.
For any $n<\bar n$, we have $\Delta\tau(n)>0$, which implies $\tau^{\rm dl}(n)>\tau^{\rm ul}(n)$.
Hence, the overall delay is dominated by the \ac{DL} part, i.e., $\tau(n)=\tau^{\rm dl}(n)$.
Since $\tau^{\rm dl}(n)$ is non-increasing in $n$, the best choice among all $n<\bar n$ is $n=\bar n-1$.
Similarly, for any $n\geq \bar n$, we have $\Delta\tau(n)\leq 0$, which implies $\tau^{\rm ul}(n)\geq \tau^{\rm dl}(n)$.
Hence, the overall delay is dominated by the \ac{UL} part, i.e., $\tau(n)=\tau^{\rm ul}(n)$.
Since $\tau^{\rm ul}(n)$ is non-decreasing in $n$, the best choice among all $n\geq \bar n$ is $n=\bar n$.
Therefore, the optimal solution must be one of the two candidates in~\eqref{eqn:optimal_m}.

\bibliographystyle{IEEEtran}
\bibliography{IEEEabrv,mybib}

\end{document}